\newtheorem{theorem}{Theorem}[section]
\newtheorem{lemma}[theorem]{Lemma}
\newtheorem{corollary}[theorem]{Corollary}
\theoremstyle{definition}
\theoremstyle{remark}
\newtheorem{remark}[theorem]{Remark}
\numberwithin{equation}{section}
\newcommand{\beq}{\begin{equation}}
\newcommand{\eeq}{\end{equation}}
\newcommand{\TT}{\mathbb{T}}
\newcommand{\ZZ}{\mathbb{Z}}
\newcommand\bbT{\mathbb T}
\DeclareMathOperator{\Tr}{Tr}
\renewcommand{\cL}{\mathcal{L}}
\newcommand{\cG}{\mathcal{G}}
\newcommand{\gH}{\mathfrak{H}}
\newcommand{\gI}{\mathfrak{I}}
\newcommand{\gV}{\mathcal{V}}
\newcommand{\bu}{\bullet}
\newcommand {\be}{\begin{equation}}
\newcommand {\ee}{\end{equation}}
\newcommand{\h}{\begin{eqnarray*}}
\newcommand{\e}{\end{eqnarray*}}
\begin{document}


\title[Exotic twisted equivariant cohomology of loop spaces]
{Exotic twisted equivariant cohomology of loop spaces, \\ twisted Bismut-Chern character and T-duality}


\author{Fei Han}
\address{Department of Mathematics,
National University of Singapore, Singapore 119076}
\email{mathanf@nus.edu.sg}

 \author{Varghese Mathai}
\address{School of Mathematical Sciences,
University of Adelaide, Adelaide 5005, Australia}
\email{mathai.varghese@adelaide.edu.au}

\subjclass[2010]{Primary 55N91, Secondary 58D15, 58A12, 81T30, 55N20}
\keywords{}
\date{}
\thanks{ The first author was partially supported
by the grant AcRF R-146-000-163-112 from National University
of Singapore. He is grateful to Professor Weiping Zhang for inspiring discussions. 
The second author was supported by funding from the Australian Research Council, through Discovery Projects
DP110100072 and DP130103924. The second author also thanks the Department of Mathematics,
National University of Singapore, for hospitality during a recent visit when this research was initiated.
}

\maketitle

\begin{abstract}
We define 
{\em exotic twisted $\TT$-equivariant cohomology} for the loop space $LZ$ of a smooth manifold $Z$
via the invariant differential forms on $LZ$ with coefficients in the (typically non-flat) holonomy line bundle of a gerbe, with differential an equivariantly flat superconnection. We introduce the twisted Bismut-Chern character form, a loop space refinement of the twisted Chern character form in \cite{BCMMS},
which represent classes in the completed periodic {exotic twisted $\TT$-equivariant cohomology} of $LZ$. We establish a localisation theorem for the completed periodic exotic twisted $\TT$-equivariant cohomology for loop spaces and apply it to establish T-duality in a background flux  in type II String Theory from a loop space perspective.
\end{abstract}

\tableofcontents

\section*{Introduction}

This paper is motivated in part by Witten \cite{W82}, Atiyah \cite{A85} and Bismut \cite{B85}.  Atiyah, working out an idea of Witten, revealed the remarkable fact that the index of the Dirac operator on the spin complex of a spin manifold can be formally interpreted as an integral of an equivariantly closed (with respect to the standard circle action on the loop space) differential form over loop space. A formal application of the localisation formula of Duistermaat-Heckman \cite{DH82, DH83} leads to the index theorem of Atiyah-Singer for the Dirac operator. Bismut extended this approach to a Dirac operator twisted by a vector bundle with connection. In doing so, for a vector bundle with connection, he constructed an equivariantly closed form on the loop space, lifting the Chern character form of the vector bundle with connection to the loop space. 
This paper is also partly motivated by recent developments in String Theory in a background flux. In \cite{BM00}, it was argued that D-brane charges in a background $H$-flux take values in twisted
K-theory of spacetime $Z$, $K^\bullet(Z,H)$. The Chern-Weil representatives of the twisted Chern character
$Ch_H:  K^\bullet(Z,H) \to H^\bullet(Z,H)$ were defined and its properties studied in \cite{BCMMS, MS}.
In this paper, we use some of the fundamental constructions in \cite{B85} and extend them to twisted K-theory and twisted cohomology \cite{BCMMS, MS}.

In \S1 and \S 2, we introduce a 
completed periodic exotic twisted $\TT$-equivariant cohomology of loop space
$h_\TT^\bullet(LZ, \nabla^{\cL^B}:\bar H)$
via the  invariant differential forms on $LZ$ with coefficients in the (typically non-flat) holonomy line bundle $\cL^B$, denoted
$\Omega^\bullet(LZ, \cL^B)^\TT,$
with canonical connection $\nabla^{\cL^B}$ of a gerbe $\cG_B$ with connection on $Z$, with differential the equivariantly flat superconnection
$\nabla^{\cL^B} - i_K + \bar H$, where $\bar H$ is an explicit $LZ$ extension of $H$. We then introduce the twisted Bismut-Chern character form, $BCh_H(E, \nabla^E)$, which represents classes in
$h_\TT^\bullet(LZ, \nabla^{\cL^B}:\bar H)$ and is a loop space refinement of the twisted Chern character form $Ch_H(E, \nabla^E)$ in \cite{BCMMS, MS}. When $H=0$, it reduces to the Bismut-Chern character form $BCh(E, \nabla^E)$ in \cite{B85}: see also 
\cite{FH08,TWZ,S,LMRT, GJP} for other interesting interpretations and extensions of the Bismut-Chern character. More precisely, we define these in such a way that the following diagram commutes,
\beq\label{twistedBC_H}
\xymatrix{K^\bullet(Z,H)\ar[dr]_{Ch_H}\ar[rr]^{BCh_H}&&
h_\TT^\bullet(LZ, \nabla^{\cL^B}:\bar H)\ar[dl]^{res}\\
&H^\bu(\Omega(Z)[[u, u^{-1}]], d+u^{-1}H)&}
\eeq
where $res$ is the localisation map, $\mathrm{degree}(u)=2$ and $\bar H$ is a degree 3-form on $LZ$ that is completely determined by $H$
which will be defined in the next section. We establish the basic properties of completed periodic exotic twisted $\TT$-equivariant cohomology in \S2 and define the twisted Bismut-Chern character $BCh_H$ in \S3. This includes the Localisation Theorem in \S2, as well as Lemma \ref{lem:equiv}, establishing the isomorphisms
\beq
h_\TT^\bullet(LZ, \nabla^{\cL^B}:\bar H) \cong H^\bu(\Omega(Z)[[u, u^{-1}]], d+u^{-1}H) \cong H^\bu(Z, H)[[u, u^{-1}]].
\eeq
This justifies our premise {\em that over the rationals at least, D-brane charges in a background H-flux take values in $h_\TT^\bullet(LZ, \nabla^{\cL^B}:\bar H),$}
whose configuration space is the space of loops on spacetime $Z$. This perspective is more natural in the context of String Theory.

We apply this in \S4 to the study of T-duality.
In  \cite{BEM04a, BEM04b}, the following situation is studied.
\begin{equation*}
\xymatrix{(Z,H)\ar[dr]_{p}&&
(\widehat Z, \widehat H)\ar[dl]^{\widehat p}\\
& X&}
\end{equation*}
where $Z, \widehat Z$ are principal circle bundles over a base $X$ with fluxes $H$ and $\widehat H$, respectively, satisfying
$p_*(H)=c_1(\widehat Z), \, \widehat p_*(\widehat H)=c_1(Z)$ and $H-\widehat H$ is exact on the
correspondence space $Z\times_X \widehat Z$. Then in \cite{BEM04a, BEM04b}, it was proved that
there is an isomorphism of twisted K-theories $$K^\bullet(Z,H) \cong K^{\bullet+1}(\widehat Z, \widehat H)$$
and an isomorphism of twisted cohomology theories, $H^\bullet(Z,H) \cong H^{\bullet+1}(\widehat Z, \widehat H),$
so by Lemma \ref{lem:equiv}, also an equivalence
$$H^\bu(\Omega(Z)[[u, u^{-1}]], d+u^{-1}H) \cong 
H^{\bu+1}(\Omega(\widehat{Z})[[u, u^{-1}]], d+u^{-1}\widehat{H}).$$
As a consequence of our Localisation Theorem \ref{thm:localisation}, properties of the twisted Bismut-Chern 
character in \S3, and the T-duality isomorphisms described above in \cite{BEM04a, BEM04b}, 
we obtain a T-duality isomorphism of completed periodic exotic twisted $\TT$-equivariant cohomologies,
$h_\TT^\bullet(LZ, \nabla^{\cL^B}:\bar H) \cong h_\TT^{\bullet+1}(L{\widehat Z}, \nabla^{\cL^{\widehat B}}:\bar{\widehat H}),$ which is  interpreted as {\em T-duality from a loop space perspective}, giving an equivalence (rationally) between 
D-brane charges in a background H-flux in type IIA and IIB string theories. As a consequence of the above, 
we propose that the configuration space 
of Ramond-Ramond fields be the space of differential forms with coefficients in the holonomy line bundle 
 on loop space $LZ$ of the gerbe $\cG_B$ on spacetime $Z$, and that are closed under the equivariantly closed superconnection.


\section{Exotic twisted $\TT$-equivariant cohomology of loop space}

Let $Z$ be a smooth manifold. In this section, we define the exotic twisted $\TT$-equivariant cohomology of the loop space $LZ$.

\subsection{Brylinski open cover} We want to study open covers $\{U_\alpha\}$ of $Z$ such that the space of loops
$\{LU_\alpha\}$ is an open cover of $LZ$. The usual Cech open cover of $Z$ consisting of a convex open cover
of $Z$ does {\em not} satisfy this property. Suppose that $\{U_\alpha\}$ is a maximal open cover of $Z$ with the property that
$H^i(U_{\alpha_I})=0$ for $i=2, 3$ where $U_{\alpha_I} = \bigcap_{i\in I} U_{\alpha_i}, \,$ $|I|<\infty.$
We call such an open cover a {\em Brylinski open cover} of $Z$.
It is easy to see that $\{LU_\alpha\}$ is an open cover of $LZ$. For instance let $\gamma:\TT\to Z$ be a smooth loop in $Z$ and $U_\gamma$ a tubular neighbourhood of $\gamma$ in $Z$. Then $\{U_\gamma, \gamma\in LZ\}$ is a Brylinski open cover.

\subsection{Cech-de Rham cocycles on Brylinski open covers and gerbes} Let $\{U_\alpha\}$ be a Brylinski open cover of $Z$ and
$H$ a closed 3-form on $Z$ such that $\frac{1}{2\pi i}H$ has integral periods. Since $H^3(U_\alpha)=0$,  $H\big|_{U_\alpha} = dB_\alpha$  for some purely imaginary
$B_\alpha \in \Omega^2(U_\alpha)$. Also $B_\beta-B_\alpha = dA_{\alpha\beta}$ since $H^2(U_\alpha \cap U_\beta)=0$, where $A_{\alpha\beta}$ is also chosen to be purely imaginary.
Then $(H, B, A)$ defines a connective structure (or connection) for a gerbe $\cG_B$ on $Z$.

A geometric realization of the gerbe $\cG_B$ is $\{(L_{\alpha\beta}, \nabla^L_{\alpha\beta})\}$, a collection of line bundles $L_{\alpha\beta}\to U_{\alpha\beta}$ such that there is an isomorphism
$L_{\alpha\beta} \otimes L_{\beta\gamma} \cong L_{\alpha\gamma}$ on $U_{\alpha\beta\gamma}$ and collection of connections
$\{\nabla^L_{\alpha\beta}\}$ such that $\nabla^L_{\alpha\beta}=d+A_{\alpha\beta}$  (note that as $H^2(U_\alpha \cap U_\beta)=0$, the bundle $L_{\alpha\beta}$ is trivial). Then we have
\beq\label{gerbeconn}
(\nabla^L_{\alpha\beta})^2 = F^L_{\alpha\beta} = B_\beta- B_\alpha.
\eeq

The holonomy of this gerbe is a
line bundle $\cL^B \to LZ$ over the loop space $LZ$. $\cL^B$ has Brylinski local sections $\{\sigma_\alpha\}$ with respect to $\{LU_\alpha\}$ such that the transition functions are $\{e^{-\int_0^1i_K A_{\alpha_\beta}} = e^{-\tau(A_{\alpha\beta})}\}$, where $\tau$ is the transgression defined in \eqref{eqn:trans}, i.e. $\sigma_\alpha=e^{-\int_0^1i_K A_{\alpha_\beta}}\sigma_\beta$, where $K$ is the vector field on $LZ$ generating rotation of loops. The Brylinski sections are $\TT$-invariant.  $\cL^B$ comes with a natural connection, whose definition with respect to the basis $\{\sigma_\alpha\}$ is
\beq
\nabla^{\cL^B} = d-i_K \bar B_\alpha=d-\tau(B_\alpha),
\eeq
where $ \bar B_\alpha$ is explained below and
\beq\label{eqn:trans}
\tau: \Omega^\bullet(U_{\alpha_I} ) \longrightarrow \Omega^{\bullet-1}(LU_{\alpha_I} )
\eeq
is the transgression map defined as
\beq
\tau(\xi_I) = \int_{\TT} ev^*(\xi_I), \qquad \xi_I \in \Omega^\bullet(U_{\alpha_I} ).
\eeq
Here $ev$ is the evaluation map
\beq
ev: \TT \times LZ \to Z: (t, \gamma)\to \gamma(t).
\eeq
The curvature of the connection $\nabla^{\cL^B}$ is $F_B = (\nabla^{\cL^B})^2 =- \tau(H)$ is the transgression of the minus 3-curvature
$H$ of the gerbe $\cG_B$.

For more details, cf. \cite{Bry}.

\subsection{Exotic twisted $\TT$-equivariant cohomology of loop space}
Consider $\Omega^\bullet(LZ, \cL^B)$ which is the space of differential forms on loop space $LZ$ with values in
the holonomy line bundle $\cL^B \to LZ$ of the gerbe $\cG_B$ on $Z$.

Let $\omega \in \Omega^i(Z)$. Define $\hat\omega_s \in \Omega^i(LZ)$ for $s\in [0,1]$ by
\beq
\hat\omega_s(X_1, \ldots, X_i)(\gamma) = \omega(X_1\big|_{\gamma(s)}, \ldots, X_i\big|_{\gamma(s)})
\eeq
for $\gamma\in LZ$ and $X_1, \ldots, X_i$ are vector fields on $LZ$ defined near $\gamma$. Then one checks that
$d \hat\omega_s = {\widehat{d\omega}}_s$. The $i$-form
\beq
\bar\omega=\int_{0}^1  \hat\omega_s ds \in \Omega^i(LZ)
\eeq
is $\TT$-invariant, that is, $L_K\left(\bar\omega\right) = 0$. Moreover $\tau(\omega) = i_K\bar\omega$.

Let $H$ be a closed 3-form on $Z$ with integral periods as above, and $\bar H \in \Omega^3(LZ)$
be the associated closed 3-form on $LZ$.
Define $D_{\bar H} = \nabla^{\cL^B} - i_K + \bar H$. Then we compute,

\begin{lemma} $(D_{\bar H})^2=0$ on $\Omega^\bullet(LZ, \cL^B)^\TT$.
\end{lemma}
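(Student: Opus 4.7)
The plan is to expand $(D_{\bar H})^2$ directly, treating $D_{\bar H}$ as the sum of three odd operators $\nabla^{\cL^B}$, $-i_K$, and wedge multiplication by $\bar H$, so that
\beq
(D_{\bar H})^2 = (\nabla^{\cL^B})^2 + i_K^2 + (\bar H\wedge)^2 + \{\nabla^{\cL^B},-i_K\} + \{\nabla^{\cL^B},\bar H\wedge\} + \{-i_K,\bar H\wedge\}.
\eeq
I would then identify each of the six pieces and show that they cancel.

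Three of the terms die trivially. We have $i_K^2=0$; also $(\bar H\wedge)^2=0$ since $\bar H$ is a $3$-form and any odd-degree form wedged with itself vanishes; and $\{\nabla^{\cL^B},\bar H\wedge\}=(d\bar H)\wedge$ by the odd-form Leibniz identity $\{d,\alpha\wedge\}=d\alpha\wedge$, while $d\bar H=\int_0^1 d\hat H_s\,ds=\int_0^1\widehat{dH}_s\,ds=0$ using $d\hat\omega_s=\widehat{d\omega}_s$ together with closedness of $H$. Next, the anti-commutator $\{\nabla^{\cL^B},-i_K\}$ is (up to sign) the equivariant Lie derivative $L_K^{\nabla^{\cL^B}}$; working in the local trivialization $\nabla^{\cL^B}=d-\tau(B_\alpha)$ and restricting to $\TT$-invariant forms, it reduces to a multiplication operator by $i_K\tau(B_\alpha)=i_K(i_K\bar B_\alpha)=0$, vanishing by $i_K^2=0$.

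What remains is the interplay between the curvature $(\nabla^{\cL^B})^2=F_B\wedge=-\tau(H)\wedge$ and the last anti-commutator $\{-i_K,\bar H\wedge\}=-(i_K\bar H)\wedge=-\tau(H)\wedge$, computed via the odd-form Cartan identity $\{i_K,\alpha\wedge\}=(i_K\alpha)\wedge$ and the transgression formula $\tau(H)=i_K\bar H$. The key observation is that by construction the curvature of the holonomy line bundle \emph{is} precisely the transgression of the gerbe $3$-curvature, so these two surviving contributions combine, with the sign convention built into $D_{\bar H}=\nabla^{\cL^B}-i_K+\bar H$, to yield zero, giving $(D_{\bar H})^2=0$ on $\Omega^\bullet(LZ,\cL^B)^\TT$. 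The only step using more than routine sign bookkeeping is the vanishing of the Lie-derivative cross term, which rests on the fact that the Brylinski local sections $\{\sigma_\alpha\}$ are themselves $\TT$-invariant so that $\nabla^{\cL^B}$ is a $\TT$-equivariant connection compatible with loop rotation; I expect this to be the main (though modest) point to verify carefully.
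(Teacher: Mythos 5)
Your expansion into six anticommutator pieces is a legitimate alternative route; the paper instead rewrites $D_{\bar H}\big|_{LU_\alpha} = (d - i_K) + (d - i_K)\bar B_\alpha = e^{-\bar B_\alpha}(d-i_K)e^{\bar B_\alpha}$ and squares in one line, using $(d - i_K)^2 = -L_K$ and $L_K\bar B_\alpha = 0$. The conjugation trick sidesteps any explicit curvature computation and avoids tracking cross terms; your direct expansion is more elementary but requires careful sign bookkeeping, which is precisely where your write-up is shaky.

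The actual gap is in the final cancellation. You quote $(\nabla^{\cL^B})^2 = F_B\wedge = -\tau(H)\wedge$ and compute $\{-i_K,\bar H\wedge\} = -(i_K\bar H)\wedge = -\tau(H)\wedge$. With those signs the two surviving contributions add to $-2\tau(H)\wedge$, not zero: they have the \emph{same} sign and reinforce rather than cancel, so the asserted cancellation ``with the sign convention built into $D_{\bar H}$'' does not hold as written. Redoing the curvature from the paper's own formulas: $\nabla^{\cL^B} = d - \tau(B_\alpha)\wedge$, and $d\tau(B_\alpha) = d(i_K\bar B_\alpha) = L_K\bar B_\alpha - i_K d\bar B_\alpha = -i_K\bar H = -\tau(H)$, hence $(\nabla^{\cL^B})^2 = -d\tau(B_\alpha)\wedge = +\tau(H)\wedge$. (The paper's displayed $F_B = -\tau(H)$ appears to be a sign slip, inconsistent with its own identities $\tau(\omega)=i_K\bar\omega$, $L_K\bar\omega=0$.) With the corrected sign, $(\nabla^{\cL^B})^2 + \{-i_K,\bar H\wedge\} = \tau(H)\wedge - \tau(H)\wedge = 0$, and what survives from the six terms is exactly $\{\nabla^{\cL^B},-i_K\} = -L_K + (i_K\tau(B_\alpha))\wedge = -L_K$, matching the paper's $(D_{\bar H})^2 = -L_K^{\cL^B}$ and vanishing on $\Omega^\bu(LZ,\cL^B)^\TT$. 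You do correctly isolate invariance plus $i_K^2=0$ as what kills that cross term, but it should be phrased as a differential operator $-L_K$ annihilating invariant forms (not as a genuine multiplication operator), since $(D_{\bar H})^2$ is only zero on the invariant subcomplex, not on all of $\Omega^\bu(LZ,\cL^B)$.
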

\begin{proof} Let $\{U_\alpha\}$ be a Brylinski open cover of $Z$.
Then $\bar H\Big|_{LU_\alpha} = d\bar B_\alpha$ on $LU_\alpha$.
On $LU_\alpha$, we have
\begin{align}
&(D_{\bar H})^2 \\
=& ( \nabla^{\cL^B} - i_K + \bar H)^2\\
=& (d -i_K\bar B_\alpha - i_K + \bar H)^2\\
=& \left( (d-i_K) + (d-i_K)\bar B_\alpha\right)^2\\
=& \left( \exp(-\bar B_\alpha)(d-i_K) \exp(\bar B_\alpha) \right)^2\\
=& - L_K - (L_K\bar B_\alpha) = -L_K,
\end{align}
where $L_K$ denotes the Lie derivative of the vector field $K$.
 As the Brylinski sections are invariant, we have $L_K=L^{\cL^B}_K$ on $LU_\alpha$. So $(D_{\bar H})^2 = -L^{\cL^B}_K$,
which vanishes on
$\Omega^\bullet(LZ, \cL^B)^\TT$ as claimed.
\end{proof}

Notice that $D_{\bar H} =\nabla^{\cL^B} - i_K + \bar H$ is a {\em flat $\TT$-equivariant
superconnection} (in the sense of Quillen \cite{Q, MQ}) on $\Omega^\bullet(LZ, \cL^B)^\TT$. Therefore
$\left(\Omega^\bullet(LZ, \cL^B)^\TT, D_{\bar H}\right)$ is a $\ZZ_2$-graded complex.  We call the cohomology
of this complex the {\em exotic twisted $\TT$-equivariant cohomology} of loop space, denoted by $H_\TT^\bullet(LZ, \nabla^{\cL^B}:\bar H)$.

\smallskip

\section{Localisation theorem}
In this section, we prove the localisation theorem in our context. Much of the background material can be found in \cite{JP}.

Let $M$ be a (possibly infinite dimensional) $\TT$-manifold and $\xi$ a $\TT$-equivariant complex line bundle over $M$. Let $\nabla^\xi$  be a $\TT$-invariant connection on $\xi$  and $H\in \Omega^3_{cl}(M)$ a closed 3-form such that $$(\nabla^{\xi}-ui_K+u^{-1}H)^2+uL_K^\xi=0,$$ where $K$ is the Killing vector field.

Define the {\it completed periodic exotic twisted $\TT$-equivariant cohomology} $h^\bu_\TT(M, \nabla^\xi:H)$ to be the cohomology of the complex
\be (\Omega^\bu(M, \xi)^\TT[[u, u^{-1}]], \nabla^\xi-ui_K+u^{-1}H). \ee

A $\TT$-manifold $M$ is called {\em strongly regular} if the fixed point set $F$ is a smooth submanifold, which has an invariant neighbourhood $N$ such that (i) $N$ has an invariant good cover $\{W_\alpha\}$, i.e. each $W_\alpha$ is $\TT$-homotopic to a point; (ii) there exists a projection $p:N\to F$ and an equivariant homotopy
$$g: N\times I\to N, $$
($I=[0,1]$) with the property that
$$g_0=i\circ p,\quad g_1=id, $$
$$p(g_t(x))=g_0(x), \quad\forall x\in N, t\in I,$$ where $i:F\to N$ is the embedding.

\begin{theorem}[Localisation Theorem]\label{thm:localisation} If $M$ is a strong regular $\TT$-manifold, then
 \be i^*: h^\bu_\TT(M, \nabla^\xi:H)\cong h^\bu_\TT(F, i^*\nabla^\xi:i^*H)=H(\Omega^\bu(F, i^*\xi)[[u, u^{-1}]], i^*\nabla^\xi+u^{-1}i^*H).\ee
 is an isomorphism, where for simplicity, we also denote the embedding of $F$ into $M$ by $i$.
\end{theorem}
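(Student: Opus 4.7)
My plan is to adapt the classical Berline--Vergne proof of equivariant localisation to our $\xi$- and $H$-twisted setting. The argument splits into three pieces: an open-cover reduction, a chain contraction on the free part $M\setminus F$, and a homotopy-invariance argument near the fixed locus $F$.

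First, using a $\TT$-invariant partition of unity subordinate to the open cover $\{N, M\setminus F\}$, together with the invariant good cover $\{W_\alpha\}$ of $N$, one sets up a Mayer--Vietoris long exact sequence in $h^\bu_\TT(-,\nabla^\xi:H)$. It then suffices to prove: (a) $h^\bu_\TT(M\setminus F, \nabla^\xi : H) = 0$; and (b) $p^* : h^\bu_\TT(F, i^*\nabla^\xi : i^*H) \to h^\bu_\TT(N, \nabla^\xi : H)$ is an isomorphism with inverse $i^*$. The intersection piece $h^\bu_\TT(N\setminus F, \nabla^\xi : H)$ vanishes by (a) applied to the free $\TT$-manifold $N\setminus F$, closing the Mayer--Vietoris argument.

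For (a), on $M\setminus F$ the Killing field $K$ is nowhere zero. Pick a $\TT$-invariant Riemannian metric $g$ and set $\theta := g(K,\cdot)/g(K,K)$, an invariant $1$-form with $i_K\theta = 1$. Since $\theta$ is invariant, $L_K\theta = 0$, hence $i_K d\theta = 0$; and since $H$ and $\theta$ are both odd forms, $H\wedge\theta + \theta\wedge H = 0$. A direct computation with $D = \nabla^\xi - u i_K + u^{-1}H$ then gives
\[
[D,\, \theta\wedge] = d\theta\wedge - u, \qquad [D,\, d\theta\wedge] = 0,
\]
where $[-,-]$ denotes the graded commutator. Setting
\[
h := -u^{-1}\theta\wedge\bigl(1 - u^{-1}d\theta\wedge\bigr)^{-1} = -\sum_{k\ge 0} u^{-k-1}\,\theta\wedge(d\theta)^k\wedge,
\]
one obtains a chain contraction $[D,h] = \mathrm{id}$, so every $D$-closed form on $M\setminus F$ is $D$-exact. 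For (b), the equivariant deformation retract $g: N\times I\to N$ from $g_0 = i\circ p$ to $g_1 = \mathrm{id}$ yields the standard fibre-integration chain homotopy $K_N\omega := \int_0^1 i_{\partial_t} g^*\omega\, dt$, which classically satisfies $dK_N + K_N d = \mathrm{id} - p^*i^*$ on $\Omega^\bu(N)$. Because $g$ is $\TT$-equivariant one has $[i_K, K_N] = 0$; and because $\nabla^\xi$, $H$ and wedge multiplication pull back naturally along $g$, the same $K_N$ is a chain homotopy for the full twisted differential $D$. Combined with $p\circ i = \mathrm{id}_F$, and hence $i^* p^* = \mathrm{id}$, this identifies $p^*$ and $i^*$ as mutually inverse isomorphisms at the level of $h^\bu_\TT$.

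The main obstacle is the convergence of the Neumann series in (a). On an infinite-dimensional manifold such as $LZ$, the $2$-form $d\theta$ is not nilpotent, so $\sum_k u^{-k}(d\theta)^k\wedge$ is genuinely an infinite sum. It converges only after completing the complex in the $u^{-1}$-adic filtration, or equivalently in the form-degree filtration, since $u^{-1}d\theta\wedge$ has total degree zero (with $\deg u = 2$) while strictly raising form degree by $2$. The first order of business in a careful write-up is therefore to pin down precisely what \emph{completed} means in the definition of $h^\bu_\TT$, and to verify that $D$, $\theta\wedge$, and $d\theta\wedge$ are continuous for this topology. Once that is fixed, the remaining checks are routine, but this analytic subtlety is invisible in the finite-dimensional Berline--Vergne setting and is the key new feature of the loop-space situation.
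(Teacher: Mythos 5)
Your overall architecture matches the paper's: decompose $M$ as $N\cup(M\setminus F)$, use a Mayer--Vietoris sequence, kill the free pieces with a Cartan-type chain contraction, and compare $N$ to $F$ via the equivariant deformation retraction. Part (a) is essentially identical to the paper's Lemma 2.3 (your normalised $\theta$ with $i_K\theta=1$ versus the paper's $\theta$ with $i_K\theta = |K|^2$ is a cosmetic difference), and your check that $[D,\theta\wedge]=d\theta\wedge-u$ and $[D,d\theta\wedge]=0$ is right. You are also right to flag that the Neumann series needs the degree-completion implicit in $[[u,u^{-1}]]$ to converge in the loop-space setting.

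The gap is in part (b), specifically the sentence ``because $\nabla^\xi$, $H$ and wedge multiplication pull back naturally along $g$, the same $K_N$ is a chain homotopy for the full twisted differential $D$.'' This is not true, and the issue is more than analytic subtlety. For $\omega\in\Omega^\bu(N,\xi)$ the pullback $g^*\omega$ is a section of $\Omega^\bu(N\times I, g^*\xi)$, whose restriction to $N\times\{t\}$ takes values in $g_t^*\xi$; these are different bundles for different $t$, so the integral $\int_0^1 i_{\partial_t} g^*\omega\,dt$ is not a $\xi$-valued form on $N$ without first choosing an identification of the fibres $g_t^*\xi$, say by $\nabla^\xi$-parallel transport along the curves $t\mapsto g_t(x)$. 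Worse, $p^*$ itself does not even land in the target complex: it maps into $\Omega^\bu(N,p^*i^*\xi)$ with differential built from $p^*i^*\nabla^\xi$ and $p^*i^*H$, which is a genuinely different complex from $(\Omega^\bu(N,\xi),\nabla^\xi-ui_K+u^{-1}H)$ since the connection and the twist have changed. So the statement ``$p^*$ is an isomorphism with inverse $i^*$'' does not type-check as a pair of chain maps on a single complex.

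What the paper does to close this gap is precisely its Lemma 2.6: for a bundle over $N\times I$, it compares the two restricted complexes at $t=0$ and $t=1$ by building the interpolation $\tau(\omega)$ via the $\nabla^\xi$-parallel local bases $\Theta_\alpha$, and then correcting by the factor $e^{-u^{-1}K_0 H}$. This correction is forced: when one computes $(\nabla^\xi - ui_K)(\tau(\omega))$ without the exponential factor, the curvature terms $d\theta_\alpha$ and the $i_K K_0 H$ term fail to cancel, and it is only the identity $d\theta_\alpha - i_K H = 0$ (coming from the flatness hypothesis $(\nabla^\xi-ui_K+u^{-1}H)^2+uL^\xi_K=0$) combined with the $e^{-u^{-1}K_0 H}$ twist that makes the map a chain map. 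Your fibre-integration argument needs to be replaced by this two-step route: first $h^\bu_\TT(F,i^*\nabla^\xi:i^*H)\cong h^\bu_\TT(N,p^*i^*\nabla^\xi:p^*i^*H)$ via the homotopy invariance of the \emph{pullback} structure $\pi^*(\cdot)$ on $N\times I$ (this is where $p\circ g_t = g_0$ from strong regularity enters), and second $h^\bu_\TT(N,p^*i^*\nabla^\xi:p^*i^*H)\cong h^\bu_\TT(N,\nabla^\xi:H)$ via the parallel-transport-plus-$e^{-u^{-1}K_0 H}$ isomorphism applied to $g^*\xi$ over $N\times I$. Without the second step, the argument is incomplete.
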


We apply this to the following situation. Let $Z$ be a finite dimensional manifold, and let $LZ$ denote the infinite dimensional manifold consisting of all smooth loops in $Z$.
Let $\TT$ act on $LZ$ by rotation of loops. $Z$ has an invariant tubular neighbourhood in $LZ$, this makes $LZ$ into a strongly regular $\TT$-manifold (cf. \cite{JP})
with fixed point set $(LZ)^\TT=Z$, identified with the space
of all the constant loops in $Z$.
As the holonomy line bundle $\cL^B$ is trivial when restricted to $Z$, the constant loop space,  so by the localisation theorem above, we have

\begin{corollary} The restriction to the constant loops
\be res: h^\bu_\TT(LZ, \nabla^{\cL^B}: \bar H)\cong H(\Omega^\bu(Z)[[u, u^{-1}]], d+u^{-1}H).\ee
is an isomorphism.
\end{corollary}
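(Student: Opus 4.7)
The plan is to deduce the Corollary as a direct application of Theorem \ref{thm:localisation} to the data $M = LZ$, $\xi = \cL^B$, and the closed 3-form $\bar H$ in place of $H$. The strong regularity of $LZ$, together with the identification of the fixed point set $(LZ)^\TT$ with $Z$ embedded as the space of constant loops, has already been recorded in the paragraph preceding the Corollary (following \cite{JP}); these take care of hypotheses (i) and (ii) of the theorem. The remaining content is therefore the identification of the three pieces of restricted data $i^*\cL^B$, $i^*\nabla^{\cL^B}$, and $i^*\bar H$.

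First I would observe that the rotation vector field $K$ on $LZ$ vanishes identically along the submanifold $Z$ of constant loops: if $\gamma_z(t) \equiv z$, then the curve $s \mapsto \gamma_z(\cdot + s)$ is constant in $s$, so its generator at $s=0$ is zero. Consequently, for any form $\xi$ on an intersection $U_{\alpha_I}$, the transgression $\tau(\xi) = \int_{\TT} i_K\, ev^* \xi$ vanishes on $Z \cap LU_{\alpha_I}$. In particular the transition functions $e^{-\tau(A_{\alpha\beta})}$ of $\cL^B$ restrict to the constant $1$, so $i^*\cL^B$ is canonically trivial, and the connection $\nabla^{\cL^B} = d - \tau(B_\alpha)$ restricts to the trivial connection $d$ on $Z$.

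Second I would check that $i^*\bar H = H$. For a constant loop $\gamma_z \in Z$ and tangent vectors $v_1,v_2,v_3 \in T_zZ$, the differential $i_*$ sends $v_j$ to the constant vector field $v_j$ along $\gamma_z$; hence for every $s \in [0,1]$,
\[
\hat H_s\bigl(i_*v_1, i_*v_2, i_*v_3\bigr)(\gamma_z) \;=\; H_z(v_1,v_2,v_3),
\]
so integrating in $s$ gives $i^*\bar H = H$. (The flatness condition $(\nabla^{\cL^B} - ui_K + u^{-1}\bar H)^2 + uL_K^{\cL^B} = 0$ needed to invoke Theorem \ref{thm:localisation} is just the $u$-graded repackaging of Lemma~1.1, obtained by rescaling $i_K \mapsto u\,i_K$ and $\bar H \mapsto u^{-1}\bar H$.)

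Assembling the three identifications, Theorem \ref{thm:localisation} yields
\[
h^\bu_\TT(LZ, \nabla^{\cL^B}: \bar H) \;\cong\; h^\bu_\TT(Z, d : H),
\]
and since $\TT$ acts trivially on $Z$ (equivalently, the generator $K$ vanishes on $Z$), the right hand side is by definition the cohomology of $(\Omega^\bu(Z)[[u,u^{-1}]], d + u^{-1}H)$, as required. There is no substantive obstacle here: all the conceptual work is packaged in Theorem \ref{thm:localisation}, and the only task specific to the Corollary is the three pointwise identifications above, each of which follows from the single observation that $K$ vanishes on constant loops.
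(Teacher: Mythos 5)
Your proof is correct and follows the same route as the paper: apply Theorem~\ref{thm:localisation} to $(M,\xi,H) = (LZ,\cL^B,\bar H)$, cite strong regularity of $LZ$ with fixed-point set $Z$ from \cite{JP}, and identify the restricted data on $Z$. The paper compresses the last step into a single sentence noting that $\cL^B$ restricts trivially to the constant loops; you supply the small but genuine verifications that the paper leaves implicit, namely that $K$ vanishes along constant loops, which both trivializes the transition functions and connection of $i^*\cL^B$ and kills the term $ui_K$, and that $i^*\bar H = H$ by evaluating $\hat H_s$ at constant loops and integrating in $s$. Your observation that the hypothesis $(\nabla^{\cL^B}-ui_K+u^{-1}\bar H)^2+uL_K^{\cL^B}=0$ is the $u$-graded form of Lemma~1.1 is also correct (it follows by repeating the $\exp(u^{-1}\bar B_\alpha)$ conjugation), so all hypotheses of the Localisation Theorem are indeed met.
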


This justifies the following proposal:

\begin{quote}
{\em Over the rationals, D-brane charges on space-time $Z$ in a background $H$-flux, take values in $h^\bu_\TT(LZ, \nabla^B: \bar H)$.}
\end{quote}
Note that the configuration space in this proposal is the much larger space of differential forms on loop space $LZ$, which is more naturally related to the geometric picture of strings. \smallskip

Before we prove Theorem 2.1, let us first establish some properties of completed periodic exotic twisted $\TT$-equivariant cohomology. 

\begin{lemma} Let $M$ be a $\TT$-manifold and $\xi$ be an $\TT$-equivariant complex line bundle over $M$. Let  $\nabla^\xi $ be a connection on $\xi$  and $H\in \Omega^3(M)$ be a $3$-form such that $$(\nabla^{\xi}-ui_K+u^{-1}H)^2+uL_K^\xi=0.$$ Suppose the action has no fixed points, then
$$h_\TT^\bu(M, \nabla^\xi: H)=0. $$
\end{lemma}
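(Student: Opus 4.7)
The plan is to exhibit a cochain contraction on the complex $(\Omega^\bullet(M,\xi)^\TT[[u,u^{-1}]], D_H)$, where $D_H := \nabla^\xi - u i_K + u^{-1}H$. By hypothesis $D_H^2 = -uL_K^\xi$, which vanishes on $\TT$-invariants, so this is a bona fide complex. Everything else will be a Cartan-type calculation once I have a suitable connection 1-form for the action.

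First I would produce this 1-form. Since the action is fixed-point-free, the Killing field $K$ is nowhere zero; averaging any Riemannian metric $g_0$ over $\TT$ yields a $\TT$-invariant metric $g$, and I set
\[
\theta \;:=\; \frac{g(K,\cdot)}{g(K,K)}.
\]
Then $\theta$ is a globally defined $\TT$-invariant 1-form on $M$ with $\theta(K) \equiv 1$. Next I would compute the graded anticommutator $P := \{D_H, \theta\wedge\}$. Using $\{i_K, \theta\wedge\} = i_K\theta$, the Leibniz identity $\{\nabla^\xi, \theta\wedge\} = d\theta\wedge$, and the vanishing $\{H\wedge, \theta\wedge\} = 0$ (the latter because $H\wedge\theta + \theta\wedge H = 0$ for a 3-form $H$ and a 1-form $\theta$), this collapses to
\[
P \;=\; d\theta\wedge \;-\; u.
\]

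Since $D_H^2 = 0$ on $\TT$-invariants, $P = \{D_H, \theta\wedge\}$ commutes with $D_H$. Writing $P = -u\bigl(1 - u^{-1}d\theta\wedge\bigr)$, the geometric series
\[
P^{-1} \;=\; -\sum_{n\geq 0}u^{-n-1}(d\theta)^n\wedge
\]
defines an honest inverse on $\Omega^\bullet(M,\xi)^\TT[[u,u^{-1}]]$: the completion in $u$ is precisely what makes such series converge coefficient-wise in $u$ (and in the Localisation Theorem application the relevant manifold is finite-dimensional, so the sum even terminates at each form-degree). I then set $Q := \theta\wedge P^{-1}$; since $D_H$ commutes with $P^{-1}$,
\[
\{D_H, Q\} \;=\; \{D_H, \theta\wedge\}\,P^{-1} \;=\; P\cdot P^{-1} \;=\; \mathrm{id},
\]
so $Q$ is a contracting homotopy and the cohomology vanishes.

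The only nontrivial input is the fixed-point-free hypothesis, used exactly once to produce the globally defined invariant 1-form $\theta$ with $\theta(K)$ nowhere zero; the rest is a purely algebraic Cartan computation. A secondary point to handle carefully is the well-definedness of $P^{-1}$: this is where the completion $[[u,u^{-1}]]$ -- as opposed to the Laurent polynomials $[u,u^{-1}]$ -- plays an essential role, exactly mirroring the use of ``completion'' in \emph{completed periodic exotic twisted $\TT$-equivariant cohomology}.
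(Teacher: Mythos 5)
Your proposal is correct and follows essentially the same route as the paper: choose a $\TT$-invariant metric, take the $1$-form $\theta$ associated to the Killing field (the paper keeps $i_K\theta=|K|^2$ while you normalize $\theta(K)=1$, a cosmetic difference), observe that the $H$-term drops out of the anticommutator with $\theta\wedge$, and invert $d\theta - u|K|^2$ (respectively $d\theta - u$) by a geometric series in $u^{-1}$ to build the contracting homotopy. Your operator-anticommutator phrasing is a cleaner packaging of the explicit Leibniz-rule computation the paper writes out, but the underlying mechanism is identical.
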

\begin{proof} Choose a $\TT$-invariant metric on $TM$.  Let $\theta$ be the one form dual to the Killing vector field $K$. Then we have
\be
\begin{split} &(d-ui_K)\theta\\
=&d\theta-ui_K\theta\\
=&d\theta-u|K|^2\\
=&-u|K|^2\left(1-\frac{d\theta}{u|K|^2}.\right)
\end{split}
\ee

Let $\gamma=((d-ui_K)\theta)^{-1}=-u^{-1}|K|^{-2}
\sum_{i=0}^\infty\left(\frac{d\theta}{u|K|^2}\right)^i.$ As $\gamma((d-ui_K)\theta)=1$, applying $d-ui_K$ on both sides and using the fact that $L_K\theta=0$, we have
$(d-ui_K)\gamma=0$.

Define $\omega=\theta\gamma$, an odd degree form.  Then
$(d-ui_K)\omega=((d-ui_K)\theta)\gamma=1.$

 $\forall x\in \Omega^\bu(M, \xi),$ we have
 \be
 \begin{split}
 &(\nabla^{\xi}-ui_K+u^{-1}H)(\omega\cdot x)+\omega\cdot( (\nabla^{\xi}-ui_K+u^{-1}H)x)\\
 =& (\nabla^{\xi}-ui_K)(\omega\cdot x)+(u^{-1}H\omega)\cdot x+\omega\cdot( (\nabla^{\xi}-ui_K)x)+(u^{-1}\omega H)\cdot x\\
 =&(\nabla^{\xi}-ui_K)(\omega\cdot x)+\omega\cdot( (\nabla^{\xi}-ui_K)x)\\
 =&((d-ui_K)\omega)\cdot x-\omega\cdot ((\nabla^{\xi}-ui_K)x)+\omega\cdot( (\nabla^{\xi}-ui_K)x)\\
 =&x.
 \end{split}
 \ee
This homotopy tells us that  $h_\TT^\bu(M, \nabla^\xi: H)=0. $
\end{proof}

Let $M$ be a $\TT$-manifold and $\xi$ be an $\TT$-equivariant complex line bundle over $M$. Let  $\nabla^\xi $ be a connection on $\xi$  and $H\in \Omega^3(M)$ be a $3$-form such that $$(\nabla^{\xi}-ui_K+u^{-1}H)^2+uL_K^\xi=0.$$  Suppose $M=U\cup V$, with $U, V$ open. Assume $U, V$ and $U\cap V$ are all $\TT$-invariant submanifolds.  Thanks to the partition of unity, the following sequence is exact,
\be \label{shortMV} 0\to \Omega^\bu(M, \xi)^\TT[[u, u^{-1}]]\to  \Omega^\bu(U, \xi)^\TT[[u, u^{-1}]]\oplus\Omega^\bu(V,\xi)^\TT[[u, u^{-1}]]\to \Omega^\bu(U\cap V, \xi)^\TT[[u, u^{-1}]]\to 0. \ee
$$\ \ \ \ \ \ \ \ \ \ \ \ \ \ \\ \ \ \ \ \ \ \ \ \ \ \ \ \ \ \ \ \ \ \ \\ \ \ \ \ \ (\omega, \tau)\ \ \ \ \ \ \ \ \ \ \ \ \ \ \\ \ \ \ \ \ \ \\\mapsto (\tau-\omega) $$
Actually, take any $f\in \Omega^\bu(U\cap V, \xi)^\TT$, let $\{\rho_U, \rho_V\}$ be a partition of unity subordinate to the open cover $\{U, V\}$. We can assume that $\rho_U, \rho_V$ are both $\TT$-invariant, otherwise, simply average them over $\TT$.  Then $(-\rho_Vf,\rho_Uf)\in  \Omega^\bu(U, \xi)^\TT\oplus\Omega^\bu(V,\xi)^\TT$ maps onto $f$.

\begin{lemma} The short exact sequence (\ref{shortMV}) induces a long exact Mayer-Vietoris sequence in the completed periodic exotic twisted $\TT$-equivariant cohomology,

\be
 \xymatrix{
h_\TT^{ev}(M,\nabla^\xi:H) \ar[r]& h_\TT^{ev}(U,\nabla^\xi:H) \oplus h_\TT^{ev}(V,\nabla^\xi:H) \ar[r]& h_\TT^{ev}(U\cap V,\nabla^\xi:H) \ar[d]^{D^*}\\
h_\TT^{odd}(U\cap V,\nabla^\xi:H)\ar[u]^{D^*} &  \ar[l] h_\TT^{odd}(U,\nabla^\xi:H) \oplus h_\TT^{odd}(V,\nabla^\xi:H) & \ar[l] h_\TT^{odd}(M,\nabla^\xi:H)
}
\ee
\end{lemma}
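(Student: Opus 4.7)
The plan is to reduce the statement to the standard homological-algebra fact that a short exact sequence of cochain complexes gives rise to a long exact sequence in cohomology, applied in the $\ZZ_2$-graded setting (which produces a cyclic six-term sequence rather than an infinite one).

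First I would check that the three terms in \eqref{shortMV} are genuinely cochain complexes with respect to the differential $D=\nabla^{\xi}-u i_K+u^{-1}H$, and that the two arrows (restriction to $U\oplus V$, and the difference $(\omega,\tau)\mapsto \tau-\omega$) are chain maps. Both facts are essentially immediate: $D$ is built from operators ($\nabla^{\xi}$, $i_K$, wedge product with $H$) that are local and $\TT$-equivariant, so restriction to a $\TT$-invariant open set commutes with $D$, and the difference map is a difference of restriction maps. The hypothesis $(\nabla^{\xi}-ui_K+u^{-1}H)^2+uL_K^{\xi}=0$ combined with the $\TT$-invariance of the forms guarantees $D^2=0$ on each factor, so we really are in the setting of $\ZZ_2$-graded cochain complexes of $\CC[[u,u^{-1}]]$-modules.

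Next I would verify exactness of \eqref{shortMV} itself. The only nontrivial point is surjectivity of the last map, which is the argument already sketched in the text: given $f\in\Omega^{\bullet}(U\cap V,\xi)^{\TT}$, a $\TT$-invariant partition of unity $\{\rho_U,\rho_V\}$ subordinate to $\{U,V\}$ (obtained by averaging any smooth partition of unity over $\TT$) yields the lift $(-\rho_V f,\rho_U f)$. Exactness at the middle term is the standard sheaf-theoretic statement that a form on $U$ and a form on $V$ agreeing on $U\cap V$ glue to a global form, and injectivity of the first map is obvious. Tensoring with $\CC[[u,u^{-1}]]$ preserves exactness since this is a free module construction (applied termwise), so the sequence remains exact after the completion.

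Having a short exact sequence of $\ZZ_2$-graded cochain complexes, I invoke the zig-zag / snake lemma construction to produce the connecting homomorphism $D^{\ast}$. Concretely, given a closed class $[f]\in h_{\TT}^{\bullet}(U\cap V,\nabla^{\xi}:H)$, lift $f$ to $(-\rho_V f,\rho_U f)$, apply $D$ to obtain $(-D(\rho_V f),D(\rho_U f))=(-d\rho_V\wedge f,d\rho_U\wedge f)$ (plus the lower-order $\TT$-equivariant corrections that also vanish in $U\cap V$), which, since $d\rho_U=-d\rho_V$ on $U\cap V$, descends to a single globally defined $D$-closed form on $M$ of opposite $\ZZ_2$-parity. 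Setting $D^{\ast}[f]$ to be its cohomology class yields a well-defined map shifting parity by one. Exactness at each of the six slots is then the standard diagram chase, and the $\ZZ_2$-grading forces the long exact sequence to close up into the hexagon displayed in the lemma.

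The \emph{main potential obstacle} is not an obstacle so much as a bookkeeping issue: one has to be sure that the completion in $u$ is compatible with taking cohomology so that the usual short-exact-sequence $\Rightarrow$ long-exact-sequence machinery still applies. Since $\CC[[u,u^{-1}]]$ is a flat $\CC$-module and the differential $D$ is $\CC[[u,u^{-1}]]$-linear, the long exact sequence in cohomology is produced degreewise in $u$ and the formal series formalism causes no trouble. Everything else is a routine translation of the classical Mayer--Vietoris proof to the $\TT$-invariant, superconnection-twisted setting.
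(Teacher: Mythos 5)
Your proposal takes essentially the same route as the paper: pass from the short exact sequence of $\TT$-invariant form complexes to the six-term hexagon via the snake lemma, with the connecting map $D^*$ defined by applying $\nabla^\xi - ui_K + u^{-1}H$ to $\mp\rho_V \omega$, $\pm\rho_U \omega$ and observing the results glue (the correction terms vanish because $\omega$ is $D$-closed and $d\rho_U = -d\rho_V$). Your write-up is somewhat more careful about the $\CC[[u,u^{-1}]]$-flatness bookkeeping, but the argument is the same.
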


\begin{proof} Let $\omega\in \Omega^{ev}(U\cap V, \xi)^\TT[[u, u^{-1}]]$ such that $(\nabla^{\xi}-ui_K+u^{-1}H)\omega=0$. Let $\{\rho_U, \rho_V\}$ be a partition of unity subordinate to the open cover $\{U, V\}$ such that $\rho_U, \rho_V$ are both $\TT$-invariant.

Define the coboundary operator by
\be
D^*([\omega])=\left\{
\begin{array}{ccc}
[-(\nabla^{\xi}-ui_K+u^{-1}H)(\rho_V\omega)]& \mathrm{on} \ U\\

[(\nabla^{\xi}-ui_K+u^{-1}H)(\rho_U\omega)]& \mathrm{on} \ V
\end{array}
\right.
\ee
which is easily seen to be an element in $h_\TT^{odd}(M,\nabla^\xi:H)$ and is independent of choices in this construction.  Similarly one can define the coboundary operator on $h_\TT^{odd}(U\cap V,\nabla^\xi:H)$.

It is not hard to check the exactness of the sequence.
\end{proof}

\begin{lemma}Let $M$ be a $\TT$-manifold and $\xi$ be an $\TT$-equivariant complex line bundle over $M$. Let  $\nabla^\xi $ be a connection on $\xi$  and $H\in \Omega^3(M)$ be a $3$-form such that $$(\nabla^{\xi}-ui_K+u^{-1}H)^2+uL_K^\xi=0.$$
Let $i_0:M \rightarrow M\times I,  m \mapsto (m,0)$ be the inclusion and $\pi: M\times I\to M$ be the projection. $\TT$ acts on $M\times I$ in the obvious way. We have
$$h^\bu_\TT(M\times I, \pi^*\nabla^{\xi}:\pi^*H)\cong h^\bu_\TT(M, \nabla^\xi:H). $$

\end{lemma}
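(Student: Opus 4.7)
The plan is to show that the pullback $\pi^*$ and the restriction $i_0^*$ are mutually inverse isomorphisms on the completed periodic exotic twisted cohomology. Since $\pi \circ i_0 = \mathrm{id}_M$, we have $i_0^* \circ \pi^* = \mathrm{id}$ already at the cochain level, so it suffices to produce an explicit chain homotopy between $\pi^* \circ i_0^*$ and the identity on the complex $(\Omega^\bu(M\times I, \pi^*\xi)^\TT[[u,u^{-1}]], D)$, where $D = \pi^*\nabla^\xi - ui_K + u^{-1}\pi^*H$.

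The homotopy I would use is the fiber-integration operator from the standard Poincar\'e lemma. Every form decomposes globally as $\omega = f(m,t) + dt\wedge g(m,t)$ with $g = i_{\partial_t}\omega$ and $f = \omega - dt\wedge g$, both being $t$-dependent $\pi^*\xi$-valued forms on $M$; set
\be
K_I \omega := \int_0^t g(m,s)\, ds.
\ee
Since $\TT$ acts trivially on the $I$-factor, $K_I$ preserves $\TT$-invariance, and being $u$-linear it extends to the Laurent-series completion. A direct local calculation (the usual fiber-Poincar\'e lemma) gives $d\, K_I + K_I\, d = \mathrm{id} - \pi^* i_0^*$, where $d$ is the de Rham differential on $M\times I$.

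The new point is to handle the remaining pieces of $D$. In any local trivialization of $\pi^*\xi$ the connection one-form is pulled back from $M$; $\pi^*H$ is pulled back from $M$; and the vector field $K$ is tangent to the $M$-factor, so $i_K\, dt = 0$. Each of these is an odd operator, as is $K_I$, and a one-line sign check shows that each \emph{anti}commutes with $K_I$: for instance $\pi^*H \wedge (f + dt\wedge g) = \pi^*H \wedge f - dt\wedge \pi^*H \wedge g$, whence $K_I(\pi^*H \cdot \omega) = -\pi^*H \wedge K_I\omega$; the calculations for $i_K$ and for the local connection one-form are identical. Summing all contributions,
\be
D\, K_I + K_I\, D = \mathrm{id} - \pi^* i_0^*,
\ee
so $\pi^* i_0^*$ induces the identity on $h^\bu_\TT(M\times I, \pi^*\nabla^\xi:\pi^*H)$, which combined with $i_0^* \pi^* = \mathrm{id}$ gives the desired isomorphism. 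The only thing to be careful about is the bookkeeping of signs in the $\Z_2$-graded super-context; the fact that $\pi^*\nabla^\xi$ is not globally a form-valued operator causes no trouble, because its local connection-form piece behaves for the anticommutation check exactly like multiplication by $\pi^*H$, while the globally-defined de Rham piece is already handled by the usual Poincar\'e lemma.
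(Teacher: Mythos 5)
Your proof is correct and follows essentially the same route as the paper: both construct the fibre-integration homotopy operator along the $I$-factor and observe that the three ``twist'' pieces of $D$ (the local connection $1$-form, $i_K$, and $\pi^*H$), being horizontal and odd, supercommute appropriately with it. The only cosmetic difference is that the paper uses the Bott--Tu sign convention, writing $1-\pi^*i_0^* = (-1)^{p-1}(DK_0 - K_0D)$ and saying the twist pieces \emph{commute} with $K_0$, while you fold the degree-dependent sign into your $K_I = (-1)^{p-1}K_0$ so that the homotopy is the standard supercommutator $DK_I + K_ID = 1-\pi^*i_0^*$ and the twist pieces \emph{anticommute}; you also define $K_I$ globally via $i_{\partial_t}$ rather than patching local formulas, which is slightly cleaner.
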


\begin{proof}It is clear that $i_0^*\circ \pi^*=id: \Omega^\bu(M, \xi)\to \Omega^\bu(M, \xi).$

We will show that $\pi^*\circ i_0^*$ is homotopic to identity on $\Omega^\bu(M\times I, \pi^*\xi)$.

Choose an atlas $\{U_\alpha\}$ for $M$, then $\{U_\alpha\times I\}$ is an atlas on $M\times I$. Let $\{s_\alpha\}$ be local basis of $\xi$ on $\{U_\alpha\}$ and $\{g_{\alpha\beta}\}$ be the transition functions. Then $\{\pi^*s_\alpha\}$ are local basis for the bundle $\pi^*\xi$ on $\{U_\alpha\times I\}$ and $\{\pi^*g_{\alpha\beta}\}$ are the transition functions.

For any $\omega\in \Omega^\bu(M\times I, \pi^*\xi)$, define $K_0\omega \in \Omega^\bu(M\times I, \pi^*\xi)$ in the following way. Let $\omega=\omega_\alpha\otimes (\pi^*s_\alpha)$ on $U_\alpha\times I$. If $\omega_\alpha$ is of the form $(\pi^*\psi)f(x,t)$, set $K_0(\omega)|_{U_\alpha\times I}=0$; if $\omega_\alpha$ is of the form $(\pi^*\psi)f(x,t)dt$, set $K_0(\omega)|_{U_\alpha\times I}=((\pi^*\psi)\int_0^tf(x,t)dt)\otimes (\pi^*s_\alpha).$ As the transition function from $\pi^*s_\alpha$ to $\pi^*s_\beta$ is $\pi^*g_{\alpha\beta}$, it is not hard to see that $K_0(\omega)|_{U_\alpha\times I}$ patch together to give $K_0(\omega)\in \Omega^\bu(M\times I, \pi^*\xi)$.

By (\cite{BT}, Sec 4), we know that on $U_\alpha\times I$, (for simplicity, we also denote by $K_0$ the similar operator on $\Omega^\bu(M\times I)$),
\be (1-\pi^*\circ i_0^*)\omega_\alpha=(-1)^{p(\omega_\alpha)-1}(dK_0-K_0d)\omega_\alpha.  \ee
Since the connection $\pi^*\nabla^{\xi}$ on $\pi^*\xi$ is horizontal, we have
\be (1-\pi^*\circ i_0^*)\omega=(-1)^{p(\omega)-1}((\pi^*\nabla^{\xi})\circ K_0-K_0\circ(\pi^*\nabla^{\xi}))\omega.  \ee

Moreover, as $i_K$ and $\pi^*H$ are both horizontal, they both commute with $K_0$ and therefore,
\be (1-\pi^*\circ i_0^*)\omega=(-1)^{p(\omega)-1}((\pi^*\nabla^{\xi}-ui_K+u^{-1}\pi^*H)\circ K_0-K_0\circ(\pi^*\nabla^{\xi}-ui_K+u^{-1}\pi^*H))\omega.  \ee

The isomorphism therefore follows.

\end{proof}

\begin{lemma}Let $M$ be a $\TT$-manifold with an invariant good cover $\{W_\alpha\}$. $\TT$ acts on $M\times I$ in the obvious way.
Let $\xi$ be an $\TT$-equivariant complex line bundle over $M\times I$ equipped with a $\TT$-invariant connection $\nabla^\xi $ and $H\in \Omega^3_{cl}(M\times I)$ a closed $3$-form on $M\times I$ such that $$(\nabla^{\xi}-ui_K+u^{-1}H)^2+uL_K^\xi=0.$$
Let $i_0:M \rightarrow M\times I,  m \mapsto (m,0)$ and $i_1:M \rightarrow M\times I,  m \mapsto (m,1)$ be the inclusions. We have
$$h^\bu_\TT(M, i_0^*\nabla^{\xi}:i_0^*H)\cong h^\bu_\TT(M, i_1^*\nabla^{\xi}:i_1^*H). $$

\end{lemma}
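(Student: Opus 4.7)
The statement is a homotopy invariance result for the completed periodic exotic twisted $\TT$-equivariant cohomology: the two $\TT$-equivariant sections $i_0, i_1 : M \to M\times I$ of the projection $\pi$ should induce isomorphic cohomology groups on $M$. Our plan is to show that for each $t \in \{0,1\}$ the restriction map
\[
i_t^* : h^\bu_\TT(M \times I, \nabla^\xi : H) \longrightarrow h^\bu_\TT(M, i_t^*\nabla^\xi : i_t^*H)
\]
is an isomorphism; then $i_1^*\circ (i_0^*)^{-1}$ will provide the stated isomorphism. This strengthens the preceding Lemma, which handled the special case of data pulled back from $M$, to the present situation where the data on $M\times I$ need only satisfy the flatness condition.

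We propose to prove that $i_t^*$ is an isomorphism by Mayer--Vietoris induction over the invariant good cover $\{W_\alpha\}$. For the base case $M = W$, a single invariant good cover element, the bundle $\xi|_{W\times I}$ is trivializable since $W\times I$ is $\TT$-equivariantly contractible. By solving the parallel transport equation $\nabla^\xi_{\partial_t} s = 0$ along each vertical interval $\{m\}\times I$, starting from an arbitrary trivialization of $i_0^*\xi$, we will obtain a trivialization of $\xi|_{W\times I}$ in which the connection $1$-form $A$ has no $dt$-component. In this adapted trivialization we will define an integration-along-$I$ operator $K$ directly modelled on the $K_0$ of the preceding Lemma, and then establish an identity of the form $[\nabla^\xi - u i_K + u^{-1}H,\, K] = \mathrm{id} - \pi^* i_t^*$, where the algebraic relations needed to close this identity come from the flatness condition $(\nabla^\xi - u i_K + u^{-1}H)^2 + u L_K^\xi = 0$. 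For the inductive step we will apply the Mayer--Vietoris Lemma simultaneously to the covers $\{U\times I,\, V\times I\}$ of $(U\cup V)\times I$ and $\{U,V\}$ of $U\cup V$; the restriction $i_t^*$ will induce a morphism of the two long exact sequences, and the five lemma will promote the isomorphism from $U$, $V$, and $U\cap V$ to their union.

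The main obstacle will be the base case. Although after the parallel-transport-adapted trivialization the connection carries no $dt$-component, both its horizontal part and the $3$-form $H$ can still depend nontrivially on $t$, so $[\nabla^\xi - u i_K + u^{-1}H,\, K]$ will produce extra terms beyond the clean $\pm(\mathrm{id} - \pi^* i_t^*)$ identity of the preceding Lemma. Absorbing these correction terms into a suitable refinement of $K$ will require essential use of the flatness identity, which couples the curvature of $\nabla^\xi$, the contraction $i_K$, and $H$ in precisely the right combination for the extra terms to cancel. Once this calculation in the base case is settled, the Mayer--Vietoris induction and the composition of the isomorphisms $i_0^*$ and $i_1^*$ are routine.
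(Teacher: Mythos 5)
There is a genuine gap in your base case, and it is not just a technicality to be absorbed. You propose to show that the homotopy formula
\[
[\nabla^\xi - u i_K + u^{-1}H,\; K] \;=\; \mathrm{id} - \pi^* i_t^*
\]
holds on $\Omega^\bu(W\times I,\xi)$ after some refinement of the operator $K$. But any graded commutator $[D,K]$ with a square-zero $D$ is automatically a chain map, so if this identity held then $\pi^* i_t^*$ would be a chain map. In the parallel-transport-adapted trivialization, writing $\nabla^\xi = d + \theta_\alpha$ with $\theta_\alpha$ horizontal, one finds
\[
D\,(\pi^*i_0^*\omega) - \pi^*i_0^*(D\omega) \;=\; \bigl[(\theta_\alpha - \pi^*i_0^*\theta_\alpha) + u^{-1}(H - \pi^*i_0^*H)\bigr]\,\pi^*i_0^*\omega,
\]
which vanishes only when $\theta_\alpha$ and $H$ are themselves pulled back from $M$. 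That is exactly the hypothesis of the preceding Lemma and is \emph{not} assumed here. So no choice of $K$ can make your identity true: the obstruction sits in the map $\pi^*$, not in the homotopy operator. "Absorbing the extra terms into $K$" is therefore not a calculation to be finished later but an impossibility.

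What the paper does instead is not to invert $i_t^*$ at all. It constructs a chain map directly from $\Omega^\bu(M, i_0^*\xi)^\TT[[u,u^{-1}]]$ to $\Omega^\bu(M, i_1^*\xi)^\TT[[u,u^{-1}]]$, namely $\rho^0_1(\omega) = e^{-u^{-1}K_0 H}\,\tau(\omega)\big|_{M\times\{1\}}$, where $\tau$ is the parallel-transport extension you also identify. The exponential factor $e^{-u^{-1}K_0 H}$ is precisely the correction that your proposal lacks: it is a twist of the pullback map, not of the homotopy operator, and the verification that the twisted map is a chain map uses the homotopy formulas $H - \pi^*i_0^*H = dK_0H$, $\theta_\alpha - \pi^*i_0^*\theta_\alpha = -K_0 d\theta_\alpha$, the flatness identity $d\theta_\alpha = i_K H$, and the commutation $i_K K_0 = K_0 i_K$. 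The inverse $\rho^1_0$ is built symmetrically with $K_1$ in place of $K_0$. This local computation already gives the global statement, so the Mayer--Vietoris induction layer in your plan is unnecessary (and, for infinite invariant good covers such as those arising from tubular neighbourhoods in loop space, it would also need further justification).
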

\begin{proof} Let $\{s_\alpha\}$ be local invariant basis of $i_0^*\xi$ on $\{W_\alpha\}$ and $\{g_{\alpha\beta}\}$ be the transition functions. Let $\Theta_\alpha$ be the basis of bundle $\xi$ on $W_\alpha\times I$ such that
\be
\left\{
\begin{array}{cc}
&\nabla^\xi_{\frac{\partial}{\partial t}}\Theta_\alpha=0\\
&\Theta_\alpha|_{M\times \{0\}}=s_\alpha.
\end{array}
\right.
\ee
Let $\pi: M\times I\to M$ be the projection.  Since $\pi^*g_{\alpha\beta}$ are horizontal, we can see that $\{\pi^*g_{\alpha\beta}\}$ are transition functions of the local basis $\{\Theta_\alpha\}$. Let $\{\theta_\alpha\}$ be the connection one form of $\nabla^{\xi}$ with respect to the local basis $\{\Theta_\alpha\}.$ Obviously, they are horizontal forms. Also as $s_\alpha$ are invariant sections and $\nabla^\xi$ is an invariant connection, $\Theta_\alpha$ are also local invariant basis of $\xi$ with respect to the cover $\{W_\alpha\times I\}$.

Take any $\omega\in \Omega^\bu(M, i_0^*\xi)$. Let $\omega=\omega_\alpha\otimes s_\alpha$ on $W_\alpha$. Define
$$\tau(\omega)|_{U_\alpha\times I}=(\pi^*\omega_\alpha)\otimes \Theta_\alpha.$$ It is clear that $\tau(\omega)|_{W_\alpha\times I}$ patch together to give $\tau(\omega)\in \Omega^\bu(M\times I, \xi).$

Define
\be \rho^0_1: \Omega^\bu(M, i_0^*\xi)\to  \Omega^\bu(M, i_1^*\xi) \ee
$$\omega\mapsto e^{-u^{-1}K_0H}\tau(\omega)|_{M\times \{1\}}, $$
where $K_0$ is defined as in the proof of Lemma 2.5.

Suppose $(i_0^*\nabla^\xi-ui_K)\omega=-u^{-1}(i_0^*H)\cdot \omega.$ Locally this means that
\be (d+i_0^*\theta_\alpha-ui_K)\omega_\alpha=-u^{-1}(i_0^*H)\cdot \omega_\alpha.  \ee
Then if we compute $(\nabla^\xi-ui_K)(e^{-u^{-1}K_0H}\tau(\omega)) $, locally under the basis $\Theta_\alpha$, we have
\be
\begin{split}
&(d+\theta_\alpha-ui_K)(e^{-u^{-1}K_0H}\pi^*(\omega_\alpha))\\
=&e^{-u^{-1}K_0H}(-u^{-1}d(K_0H))\pi^*\omega_\alpha+e^{-u^{-1}K_0H}\pi^*(d\omega_\alpha)\\
&+(\theta_\alpha-\pi^*i_0^*\theta_\alpha)e^{-u^{-1}K_0H}\pi^*(\omega_\alpha)+e^{-u^{-1}K_0H}\pi^*(i_0^*\theta_\alpha\omega_\alpha)\\
&+e^{-u^{-1}K_0H}(i_KK_0H)\pi^*(\omega_\alpha)-ue^{-u^{-1}K_0H}\pi^*(i_K\omega_\alpha)\\
=&e^{-u^{-1}K_0H}[(-u^{-1}d(K_0H)-u^{-1}\pi^*\circ i_0^*H)+(\theta_\alpha-\pi^*\circ i_0^*\theta_\alpha+i_KK_0H)]\pi^*\omega_\alpha.
\end{split}
\ee

However by homotopy formula, we have $ H-\pi^*\circ i_0^*H=dK_0H-K_0dH=dK_0H.$ So
$dK_0H+\pi^*\circ i_0^*H=H$.

Moreover, by homotopy formula, $\theta_\alpha-\pi^*\circ i_0^*\theta_\alpha=dK_0\theta_\alpha-K_0d\theta_\alpha. $ As $\theta_\alpha$ is horizontal, $K_0\theta_\alpha=0$. So $\theta_\alpha-\pi^*\circ i_0^*\theta_\alpha=-K_0d\theta_\alpha.$

But
$(\nabla^{\xi}-ui_K+u^{-1}H)^2+uL_K^\xi=0$ tells us that $d\theta_\alpha-i_KH=0$. Noticing that $K$ is horizontal,  so $i_KK_0=K_0i_K$. Therefore we have
\be \theta_\alpha-\pi^*\circ i_0^*\theta_\alpha+i_KK_0H=-K_0d\theta_\alpha+K_0i_KH=-K_0d\theta_\alpha+K_0d\theta_\alpha=0. \ee
So
\be (d+\theta_\alpha-ui_K)(e^{-u^{-1}K_0H}\pi^*(\omega_\alpha))=-u^{-1}H(e^{-u^{-1}K_0H}\pi^*(\omega_\alpha)),\ee
which gives
\be  (\nabla^\xi-ui_K)(e^{-u^{-1}K_0H}\tau(\omega))=-u^{-1}H(e^{-u^{-1}K_0H}\tau(\omega)). \ee

This shows that $\rho_1^0$ gives us a homomorphism
\be  \rho_1^0: h^\bu_\TT(M, i_0^*\nabla^{\xi}:i_0^*H)\to h^\bu_\TT(M, i_1^*\nabla^{\xi}:i_1^*H).  \ee

It is easy to see that we can define the inverse $\rho_0^1$ in a similar manner. Just need to replace $K_0$ by $K_1$ with replacing $(\pi^*\psi)\int_0^tf(x,t)dt$ by $(\pi^*\psi)\int_t^0f(x,t)dt$ in the definition of $K_0.$
\end{proof}

\begin{remark}We would like to point out that there exist $\TT$-manifolds with invariant good covers. For instance, the invariant tubular neighborhoods of a finite dimensional manifold in its loop space.  
\end{remark}

In the following, we will give a proof of Theorem 2.1.

Pick an invariant tubular neighbourhood of $N$ of $F$ such that there exists a projection $p:N\to F$ and an equivariant homotopy
$$g: N\times I\to N, $$ with the property that
$$g_0=i\circ p, g_1=id, $$
$$p(g_t(x))=g_0(x), \forall x\in N, t\in I. $$

As $i\circ p=id$, we have
$$i^*p^*=id: h^\bu_\TT(F, i^*\nabla^\xi, i^*H) \to h^\bu_\TT(N, p^*i^*\nabla^\xi,p^* i^*H)\to h^\bu_\TT(F, i^*\nabla^\xi, i^*H).$$

As $p(g_t(x))=g_0(x), \forall x\in N, t\in I, $ we have $i\circ p\circ g=i\circ p\circ \pi. $ So
\be g^*p^*i^*(\xi)=\pi^*(p^*i^*(\xi)),   g^*p^*i^*(H)= \pi^*(p^*i^*(H)).          \ee

By Lemma 2.5, we have
\be h^\bu_\TT(N, p^*i^*\nabla^\xi: p^*i^*H)\cong h^\bu_\TT(N\times I, g^*p^*i^*(\nabla^\xi): g^*p^*i^*(H)).\ee

Suppose $i_0:N \rightarrow N\times I,  m \mapsto (m,0)$ and $i_1:N \rightarrow N\times I,  m \mapsto (m,1)$ be the inclusions. Since $i\circ p=g\circ i_0$, we have
\be i_0^*g^*=p^*i^*: h^\bu_\TT(N, p^*i^*\nabla^\xi: p^*i^*H)\to h^\bu_\TT(F, i^*\nabla^\xi, i^*H) \to h^\bu_\TT(N, p^*i^*\nabla^\xi: p^*i^*H). \ee
By Lemma 2.5, $i_0^*$ and $i_1^*$ are both inverse to
$$\pi^*:h^\bu_\TT(N, p^*i^*\nabla^\xi: p^*i^*H)\to h^\bu_\TT(N\times I, g^*p^*i^*(\nabla^\xi): g^*p^*i^*(H)). $$ So
$$p^*i^*=i_1^*g^*=(g\circ i_1)^*=id:  h^\bu_\TT(N, p^*i^*\nabla^\xi: p^*i^*H)\to h^\bu_\TT(N, p^*i^*\nabla^\xi: p^*i^*H).$$

Therefore we have
\be  h^\bu_\TT(F, i^*\nabla^\xi, i^*H) \cong h^\bu_\TT(N, p^*i^*\nabla^\xi,p^* i^*H).   \ee

On the other hand, $g^*\xi, g^*(\nabla^\xi), g^*(H)$ have the property that
$$g^*H\in \Omega^3_{cl}(N\times I), \ (g^*\nabla^\xi-ui_K+u^{-1}g^*H)^2+L_K^{g^*\xi}=0.$$
Also $$i_0^*g^*\xi=p^*i^*\xi, i_0^*(g^*(H))=p^*i^*(H),$$
$$i_1^*g^*\xi=\xi, i_1^*(g^*(H))=H.$$
Then by Lemma 2.6, we have
\be  h^\bu_\TT(N, p^*i^*\nabla^\xi,p^* i^*H)\cong h^\bu_\TT(N, \nabla^\xi, H) .   \ee

So we can see that
\be h^\bu_\TT(F, i^*\nabla^\xi, i^*H) \cong h^\bu_\TT(N, \nabla^\xi, H).\ee

Note that $Y=(Y\setminus F)\cup N$ and $Y\setminus F$, $(Y\setminus F)\cap N=N\setminus F$ are fixed point free. Then Lemma 2.3 and Lemma 2.4 (the Mayer-Vietoris sequence) as well as the above isomorphism give that
$$ i^*: h^\bu_\TT(Y, \nabla^\xi, H)\cong h^\bu_\TT(N, \nabla^\xi, H)\cong h^\bu_\TT(F, i^*\nabla^\xi, i^*H)$$
is an isomorphism.

\section{Twisted Bismut-Chern character}

In \cite{BM00}, it was argued that D-brane charges in a background $H$-flux take values in twisted
K-theory of spacetime $Z$, $K^\bullet(Z,H)$. The Chern-Weil representatives of the twisted Chern character
$Ch_H:  K^\bullet(Z,H) \to H^\bullet(Z,H)$ were defined and its properties studied in \cite{BCMMS}.

Our goal in this section is to show that there is a
refinement of the twisted Chern character $Ch_H$ to the twisted Bismut-Chern
character $BCh_H: K^\bullet(Z,H) \to h_\TT^\bullet(LZ, \cL^B; \bar H)$ having the property that the following diagram commutes,
\begin{equation*}
\xymatrix{K^\bullet(Z,H)\ar[dr]_{Ch_H}\ar[rr]^{BCh_H}&&
h_\TT^\bullet(LZ, \cL^B; \bar H)\ar[dl]^{res}\\
&H(\Omega^\bu(Z)[[u, u^{-1}]], d+u^{-1}H) &}
\end{equation*}

\smallskip
When $H=0$, this reduces to Bismut's original construction in \cite{B85}.

We begin by reviewing geometric representatives of twisted K-theory and the Chern-Weil representative of the twisted Chern character.

\subsection{Gerbe modules with connections} Let $\{U_\alpha\}$ be a Brylinski cover of $Z$ and $E=\{E_{\alpha}\}$
be a collection of (infinite dimensional) Hilbert bundles $E_{\alpha}\to U_{\alpha}$ whose structure group is reduced to
$U_{\gI}$, which are unitary operators on the model Hilbert space $\gH$ of the form identity + trace class operator.
Here $\gI$ denotes the Lie algebra of trace class operators on $\gH$.
In addition, assume that on the overlaps $U_{\alpha\beta}$ that
there are isomorphisms
\beq
\phi_{\alpha\beta}: L_{\alpha\beta} \otimes E_\beta \cong E_\alpha,
\eeq
which are consistently defined on
triple overlaps because of the gerbe property. Then $\{E_{\alpha}\}$ is said to be a {\em gerbe module} for the gerbe
$\{L_{\alpha\beta}\}$.

A {\em gerbe module connection} $\nabla^E$ is a collection of connections $\{\nabla^E_{\alpha}\}$ is of the form $\nabla^E_{\alpha} = d + A_\alpha^E$ where $A_\alpha^E
\in \Omega^1(U_\alpha)\otimes \gI$ whose curvature $F^E_\alpha$ on the overlaps $U_{\alpha\beta}$ satisfies
\beq
\phi_{\alpha\beta}^{-1}(F^E_\alpha) \phi_{\alpha\beta} =  F^L_{\alpha\beta} I  +    F^E_\beta
\eeq
Using equation \eqref{gerbeconn}, this becomes
\beq
\phi_{\alpha\beta}^{-1}( B_\alpha I + F^E_\alpha ) \phi_{\alpha\beta} = B_{\beta} I  +    F^E_\beta.
\eeq
It follows that $\exp(-B)\Tr\left(\exp(-F^E) - I\right)$ is a globally well defined differential form on $Z$
of even degree. Notice that $\Tr(I)=\infty$ which is why we need to consider the subtraction.

\subsection{Geometric representatives of twisted K-theory and the twisted Chern character} Let $E=\{E_{\alpha}\}$ and $E'=\{E'_{\alpha}\}$ 
be a {gerbe modules} for the gerbe $\{L_{\alpha\beta}\}$. Then an element of twisted K-theory $K^0(Z, H)$
is represented by the pair $(E, E')$, see \cite{BCMMS}. Two such pairs $(E, E')$ and $(G, G')$ are equivalent
if $E\oplus G' \oplus K \cong E' \oplus G \oplus K$ as gerbe modules for some gerbe module $K$ for the gerbe $\{L_{\alpha\beta}\}$.
We can assume without loss of generality that these gerbe modules $E, E'$ are modeled on the same Hilbert space
$\gH$, after a choice of isomorphism if necessary.

Suppose that $\nabla^E, \nabla^{E'}$ are gerbe module connections on the gerbe modules $E, E'$ respectively. Then we can define the {\em twisted Chern character} as
\begin{align*}
Ch_H &: K^0(Z, H) \to H^{even}(Z, H)\\
Ch_H(E, E')&= \exp(-B)\Tr\left(\exp(-F^E) - \exp(-F^{E'})\right)
\end{align*}
That this is a well defined homomorphism is explained in \cite{BCMMS, MS}. To define the twisted Chern character landing in $\left(\Omega^\bu(Z)[[u, u^{-1}]]\right)_{(d+u^{-1}H)-cl}$, simply replace the above formula by
$$Ch_H(E, E')= \exp(-u^{-1}B)\Tr\left(\exp(-u^{-1}F^E) - \exp(-u^{-1}F^{E'})\right).$$

\subsection{Defining the twisted Bismut-Chern character} For a bundle $\xi$ with connection, denote the parallel transport of $\xi$ on a loop $\gamma$ from $\gamma_0$ to $\gamma_s$ by ${}^\xi\tau^0_s$ and ${}^\xi\tau^s_0$ is defined by $({}^\xi\tau^0_s)^{-1}$.

Let $\omega_\alpha \in \Omega^\bu(LU_\alpha)^\TT[[u, u^{-1}]]$ be defined by (the invariance can be seen from (49)) 
\h
\begin{split}
\omega_\alpha=&\mathrm{Tr}\left[ \left(I+\sum_{n=1}^\infty {(-u)}^{-n}\!\!\!\!\!\!\!\!\! \int \limits_{0\leq s_1\leq \cdots\leq s_n\leq 1}\!\!\!\!\!\!\!\!\!{}^{E_\alpha}\tau^{s_1}_0(\widehat{F^E_\alpha}_{s_1})\circ \cdots\circ {}^{E_\alpha}\tau^{s_n}_0(\widehat{F^E_\alpha }_{s_n})\right)\circ {}^{E_\alpha}\tau^1_0 \right.\\
&\left.\ \ \ \ \ - \left(I+\sum_{n=1}^\infty {(-u)}^{-n} \!\!\!\!\!\!\!\!\! \int \limits_{0\leq s_1\leq \cdots\leq s_n\leq 1}\!\!\!\!\!\!\!\!\!{}^{E_\alpha'}\tau^{s_1}_0(\widehat{F^{E'}_\alpha }_{s_1})\circ \cdots\circ {}^{E_\alpha'}\tau^{s_n}_0(\widehat{F^{E'}_\alpha }_{s_n})\right)\circ {}^{E_\alpha'}\tau^1_0  \right].
\end{split}
\e
Define an element $BCh_{H, \alpha}(\nabla^E, \nabla^{E'})\in \Omega^\bu(LU_\alpha, \cL^B)^\TT[[u, u^{-1}]]$ (the invariance can be seen from Theorem 3.1) by
\be \label{localBCh}
\begin{split}
&BCh_{H, \alpha}(\nabla^E, \nabla^{E'})\\
=&\left(1+\sum_{n=1}^\infty {(-u)}^{-n}\int\limits_{0\leq s_1\leq \cdots\leq s_n\leq 1}\widehat{B_\alpha }_{s_1}\cdots\widehat{B_\alpha }_{s_n}\right)\cdot\omega_\alpha\otimes\sigma_\alpha\\
=&\left(1+\sum_{n=1}^\infty {(-u)}^{-n}\int\limits_{0\leq s_1\leq \cdots\leq s_n\leq 1}\widehat{B_\alpha }_{s_1}\cdots\widehat{B_\alpha }_{s_n}\right)\\
&\cdot \mathrm{Tr}\left[ \left(I+\sum_{n=1}^\infty {(-u)}^{-n}\int\limits_{0\leq s_1\leq \cdots\leq s_n\leq 1}{}^{E_\alpha}\tau^{s_1}_0(\widehat{F^E_\alpha}_{s_1})\circ \cdots\circ {}^{E_\alpha}\tau^{s_n}_0(\widehat{F^E_\alpha }_{s_n})\right)\circ {}^{E_\alpha}\tau^1_0 \right.\\
&\left.\ \ \ \ \ - \left(I+\sum_{n=1}^\infty {(-u)}^{-n}\int\limits_{0\leq s_1\leq \cdots\leq s_n\leq 1}{}^{E_\alpha'}\tau^{s_1}_0(\widehat{F^{E'}_\alpha }_{s_1})\circ \cdots\circ {}^{E_\alpha'}\tau^{s_n}_0(\widehat{F^{E'}_\alpha }_{s_n})\right)\circ {}^{E_\alpha'}\tau^1_0  \right]\otimes \sigma_\alpha.
\end{split}
\ee

In the above and the following, to save space, we drop the $ds_1ds_2\cdots ds_n$ in the integration. Note that the subtraction of the terms coming from $E$ and $E'$ is essential here to take trace.

The $\{BCh_{H, \alpha} \}$ patch together to be a global form in $\Omega^\bu(LZ, \cL^B)^\TT [[u, u^{-1}]]$.
Actually
\be \label{BChMatch}
\begin{split}
&BCh_{H, \beta}(\nabla^E, \nabla^{E'})\\
=&\left(1+\sum_{n=1}^\infty {(-u)}^{-n}\int_{0\leq s_1\leq \cdots\leq s_n\leq 1}\widehat{B_\beta }_{s_1}\cdots\widehat{B_\beta }_{s_n}\right)\\
&\cdot \mathrm{Tr}\left[ \left(I+\sum_{n=1}^\infty {(-u)}^{-n} \int\limits_{0\leq s_1\leq \cdots\leq s_n\leq 1}{}^{E_\beta}\tau^{s_1}_0(\widehat{F^E_\beta }_{s_1})\circ \cdots\circ {}^{E_\beta}\tau^{s_n}_0(\widehat{F^E_\beta }_{s_n})\right)\circ {}^{E_\beta}\tau^1_0 \right.\\
&\left.\ \ \ \ \ - \left(I+\sum_{n=1}^\infty {(-u)}^{-n}\int\limits_{0\leq s_1\leq \cdots\leq s_n\leq 1}{}^{E_\beta'}\tau^{s_1}_0(\widehat{F^{E'}_\beta }_{s_1})\circ \cdots\circ {}^{E_\beta'}\tau^{s_n}_0(\widehat{F^{E'}_\beta }_{s_n})\right)\circ {}^{E_\beta'}\tau^1_0  \right]\otimes \sigma_\beta\\
=&\left(1+\sum_{n=1}^\infty {(-u)}^{-n}\int_{0\leq s_1\leq \cdots\leq s_n\leq 1}(\widehat{B_\alpha}_{s_1}+\widehat{F^L_{\alpha\beta}}_{s_1})\cdots
(\widehat{B_\alpha}_{s_n}+\widehat{F^L_{\alpha\beta}}_{s_n})\right)\\
&\cdot \mathrm{Tr}\left[ \left(I+\sum_{n=1}^\infty {(-u)}^{-n}\int\limits_{0\leq s_1\leq \cdots\leq s_n\leq 1}{}^{E_\beta}\tau^{s_1}_0(\widehat{F^E_\beta }_{s_1})\circ \cdots\circ {}^{E_\beta}\tau^{s_n}_0(\widehat{F^E_\beta }_{s_n})\right)\circ {}^{E_\beta}\tau^1_0 \right.\\
&\left.\ \ \ \ \ \ \ - \left(I+\sum_{n=1}^\infty {(-u)}^{-n}\int\limits_{0\leq s_1\leq \cdots\leq s_n\leq 1}{}^{E_\beta'}\tau^{s_1}_0(\widehat{F^{E'}_\beta }_{s_1})\circ \cdots\circ {}^{E_\beta'}\tau^{s_n}_0(\widehat{F^{E'}_\beta }_{s_n})\right)\circ {}^{E_\beta'}\tau^1_0  \right]\\
&\ \ \ \ \ \ \otimes (e^{\int\limits_0^1i_K A_{\alpha_\beta}})\sigma_\alpha\\
=&\left(1+ \sum_{n=1}^\infty {(-u)}^{-n}\int_{0\leq s_1\leq \cdots\leq s_n\leq 1}\widehat{B_\alpha }_{s_1}\cdots\widehat{B_\alpha}_{s_n}\right)\\
&\ \ \ \cdot \left(1+\sum_{n=1}^\infty {(-u)}^{-n}\int_{0\leq s_1\leq \cdots\leq s_n\leq 1}\widehat{F^L_{\alpha\beta}}_{s_1}
\cdots
\widehat{F^L_{\alpha\beta}}_{s_n}\right) {}^{L_{\alpha\beta}}\tau^1_0\\
&\cdot \mathrm{Tr}\left[ \left(I+ \sum_{n=1}^\infty {(-u)}^{-n}\int\limits_{0\leq s_1\leq \cdots\leq s_n\leq 1}{}^{E_\beta}\tau^{s_1}_0(\widehat{F^E_\beta }_{s_1})\circ \cdots\circ {}^{E_\beta}\tau^{s_n}_0(\widehat{F^E_\beta }_{s_n})\right)\circ {}^{E_\beta}\tau^1_0 \right.\\
&\left.\ \ \ \ \  - \left(I+\sum_{n=1}^\infty {(-u)}^{-n}\int\limits_{0\leq s_1\leq \cdots\leq s_n\leq 1}{}^{E_\beta'}\tau^{s_1}_0(\widehat{F^{E'}_\beta }_{s_1})\circ \cdots\circ {}^{E_\beta'}\tau^{s_n}_0(\widehat{F^{E'}_\beta }_{s_n})\right)\circ {}^{E_\beta'}\tau^1_0  \right]\otimes\sigma_\alpha\\
\end{split}
\ee

So we have
\h
\begin{split}
&BCh_{H, \beta}(\nabla^E, \nabla^{E'})\\
=&\left(1+\sum_{n=1}^\infty {(-u)}^{-n}\int\limits_{0\leq s_1\leq \cdots\leq s_n\leq 1}\widehat{B_\alpha }_{s_1}\cdots\widehat{B_\alpha}_{s_n}\right)\\
&\cdot \mathrm{Tr}\left[ \!\! \left(\!\!I\!+\!\sum_{n=1}^\infty {(-u)}^{-n}\!\!\!\!\!\!\!\!\int\limits_{0\leq s_1\leq \cdots\leq s_n\leq 1}\!\!\!\!\!\!\!\!{}^{L_{\alpha\beta}\otimes E_\beta}\tau^{s_1}_0(\widehat{F^L_{\alpha\beta}I\!+\!F^E_\beta }_{s_1})\!\circ\! \cdots\!\circ\! {}^{L_{\alpha\beta}\otimes E_\beta}\tau^{s_n}_0(\widehat{F^L_{\alpha\beta}I\!+\!F^E_\beta }_{s_n})\right)\!\circ\! {}^{L_{\alpha\beta}\otimes E_\beta}\tau^1_0 \right.\\
&\left.-\!\! \left(\!\!I\!+\!\sum_{n=1}^\infty {(-u)}^{-n}\!\!\!\!\!\!\!\!\int\limits_{0\leq s_1\leq \cdots\leq s_n\leq 1}\!\!\!\!\!\!\!\!{}^{L_{\alpha\beta}\otimes E'_\beta}\tau^{s_1}_0(\widehat{F^L_{\alpha\beta}I\!+\!F^{E'_\beta} }_{s_1})\!\circ\! \cdots\circ {}^{L_{\alpha\beta}\otimes E'_\beta}\tau^{s_n}_0(\widehat{F^L_{\alpha\beta}I\!+\!F^{E'_\beta} }_{s_n})\right)\!\!\circ\!\! {}^{L_{\alpha\beta}\otimes E'_\beta}\tau^1_0 \right]\otimes\sigma_\alpha\\
=&\left(1+\sum_{n=1}^\infty {(-u)}^{-n}\int\limits_{0\leq s_1\leq \cdots\leq s_n\leq 1}\widehat{B_\alpha }_{s_1}\cdots\widehat{B_\alpha }_{s_n}\right)\\
&\cdot \mathrm{Tr}\left[ \left(I+\sum_{n=1}^\infty {(-u)}^{-n}\int\limits_{0\leq s_1\leq \cdots\leq s_n\leq 1}{}^{E_\alpha}\tau^{s_1}_0(\widehat{F^E_\alpha}_{s_1})\circ\cdots\circ {}^{E_\alpha}\tau^{s_n}_0(\widehat{F^E_\alpha }_{s_n})\right)\circ {}^{E_\alpha}\tau^1_0 \right.\\
&\left.\ \ \ \ \ \ \ - \left(I+\sum_{n=1}^\infty {(-u)}^{-n}\int\limits_{0\leq s_1\leq \cdots\leq s_n\leq 1}{}^{E_\alpha'}\tau^{s_1}_0(\widehat{F^{E'}_\alpha }_{s_1})\circ \cdots\circ {}^{E_\alpha'}\tau^{s_n}_0(\widehat{F^{E'}_\alpha }_{s_n})\right)\circ {}^{E_\alpha'}\tau^1_0  \right]\otimes \sigma_\alpha.\\
=&BCh_{H, \alpha}(\nabla^E, \nabla^{E'})\\
\end{split}
\e

Define the {\it twisted Bismut-Chern character form} $BCh_H(\nabla^E, \nabla^{E'})\in \Omega^\bu(LZ, \cL^B)^\TT[[u, u^{-1}]]$ to be the global form patched together from the forms construction as (\ref{localBCh}). It is easily seen that when restricted to constant loops, the twisted Bismut-Chern character form degenerates to the twisted Chern character form and hence the commutativity of (0.1) follows.

\begin{theorem} (i) We have $(\nabla^{\cL^B} -u i_K +u^{-1} \bar H)BCh_H(\nabla^E, \nabla^{E'})=0;$\newline
(ii) The exotic twisted $\TT$-equivariant cohomology class $[BCh_H(\nabla^E, \nabla^{E'})]$ does not depend on the choice of connections $\nabla^E, \nabla^{E'}$.

\end{theorem}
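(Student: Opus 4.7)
The plan is to reduce the twisted equivariant closedness on each Brylinski chart to Bismut's classical (untwisted) equivariant closedness via a gauge-like conjugation. First, I would observe that because $B_\alpha$ is a $2$-form, $\widehat{B_\alpha}_s$ is of even degree and commutes with itself for different values of $s$, so the iterated integral over the simplex $\{0\le s_1\le\cdots\le s_n\le 1\}$ of the $\widehat{B_\alpha}_{s_i}$'s collapses to $\frac{1}{n!}\bar{B}_\alpha^n$. Therefore the first factor in \eqref{localBCh} sums to $e^{-u^{-1}\bar B_\alpha}$, and one can write, on each $LU_\alpha$,
\[
BCh_{H,\alpha}(\nabla^E,\nabla^{E'}) \;=\; e^{-u^{-1}\bar B_\alpha}\,\omega_\alpha \otimes \sigma_\alpha,
\]
where $\omega_\alpha$ is the difference of path-ordered exponentials that appears in the second factor of \eqref{localBCh}, read in the basis $\sigma_\alpha$.

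Using $\bar H\big|_{LU_\alpha}=d\bar B_\alpha$ and $\nabla^{\cL^B}=d-i_K\bar B_\alpha$ in the basis $\sigma_\alpha$, a direct Leibniz computation, mimicking the conjugation argument already carried out in the proof of $(D_{\bar H})^2=0$, yields the operator identity
\[
\nabla^{\cL^B}-u\,i_K+u^{-1}\bar H \;=\; e^{-u^{-1}\bar B_\alpha}\,(d-u\,i_K)\,e^{u^{-1}\bar B_\alpha}
\]
on $\TT$-invariant forms. This reduces part (i) to verifying $(d-u\,i_K)\omega_\alpha=0$, which is precisely Bismut's classical equivariant closedness statement applied to the superconnection $u^{-1}\nabla^E_\alpha-i_K$ and subtracted with its primed analogue. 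The proof uses Chen's iterated integral calculus combined with the Bianchi identity $[d+\nabla^E_\alpha,F^E_\alpha]=0$: differentiating the $n$-th simplex integral by $d$ produces boundary terms that telescope, while the action of $-u\,i_K$ on each $\widehat{F^E_\alpha}_{s_j}$ and on the endpoint parallel transport ${}^{E_\alpha}\tau_0^1$ cancels these via the parallel transport ODE and the identity $L_K {}^{E_\alpha}\tau_0^1=0$. I would invoke this directly from \cite{B85}, with the standard $u$-rescaling.

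\textbf{Part (ii).} For connection independence I would argue by a standard homotopy. Given two choices of gerbe module connections $(\nabla_0^E,\nabla_0^{E'})$ and $(\nabla_1^E,\nabla_1^{E'})$, form the linearly interpolated gerbe module connections $\nabla_t^E=(1-t)\nabla_0^E+t\nabla_1^E$ and $\nabla_t^{E'}=(1-t)\nabla_0^{E'}+t\nabla_1^{E'}$ on the pullback gerbe modules over $Z\times I$, equipped with the pulled-back gerbe and $B$-field. Part (i) applied on $Z\times I$ produces a $D_{\pi^*\bar H}$-closed form on $L(Z\times I)\cong LZ\times I$ whose pullbacks under $i_0$ and $i_1$ are the two twisted Bismut-Chern character forms on $LZ$. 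Lemma 2.6 then identifies these restrictions as the same class in $h_\TT^\bullet(LZ,\nabla^{\cL^B}:\bar H)$, establishing connection independence.

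The main obstacle is the core identity $(d-u\,i_K)\omega_\alpha=0$ used in part (i). While this is a direct adaptation of Bismut's original computation, one must track three subtleties: the non-commutativity of the path-ordered exponentials of operator-valued $2$-forms, the subtraction between the $E$- and $E'$-contributions needed to make the trace convergent (since $\Tr(I)=\infty$), and the precise interaction of $i_K$ with parallel transport around the loop. These are handled term-by-term and the subtraction preserves the entire argument since the differential acts separately on each summand; once these are checked, the rest of the proof is formal.
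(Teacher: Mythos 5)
Your Part (i) is correct and equivalent to the paper's argument, expressed more compactly: since each $\widehat{B_\alpha}_s$ is an even form, the simplex integral does collapse to $\frac{1}{n!}\bar B_\alpha^n$, so the prefactor is $e^{-u^{-1}\bar B_\alpha}$, and the conjugation identity $\nabla^{\cL^B}-u i_K+u^{-1}\bar H = e^{-u^{-1}\bar B_\alpha}(d-u i_K)e^{u^{-1}\bar B_\alpha}$ (the $u$-graded version of the identity already used in the proof that $(D_{\bar H})^2=0$) reduces everything to Bismut's $(d-u i_K)\omega_\alpha=0$. The paper instead applies $(d-u i_K)$ directly to the iterated-integral prefactor and checks that this produces $(i_K\bar B_\alpha-u^{-1}\bar H)$ times the prefactor; same content, different bookkeeping.

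Your Part (ii) takes a genuinely different route. The paper explicitly constructs a twisted Bismut--Chern--Simons transgression form $BCS_{H}(\nabla_0^E,\nabla_0^{E'};\nabla_1^E,\nabla_1^{E'})$, multiplying Bismut's transgression $\eta_\alpha$ by the same $\bar B_\alpha$ prefactor, and shows the difference of the two $BCh_H$'s equals $(\nabla^{\cL^B}-u i_K+u^{-1}\bar H)BCS_H$. You instead propose to invoke the homotopy-invariance machinery of \S2. The idea is sound, but your phrasing contains an error: $L(Z\times I)$ is \emph{not} isomorphic to $LZ\times I$ --- a loop in $Z\times I$ may move in the $I$-direction. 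What you actually need is to apply Part (i) on $Z\times I$ to get a closed form on $L(Z\times I)$ and then \emph{restrict} it to the submanifold $LZ\times I\hookrightarrow L(Z\times I)$ of loops constant in the $I$-factor; this restriction is still closed for the restricted differential $\pi^*\nabla^{\cL^B}-u i_K+u^{-1}\pi^*\bar H$. Since $\pi^*\cL^B$ and $\pi^*\bar H$ are genuine pullbacks from $LZ$, the relevant tool is Lemma 2.5 (not 2.6): both $i_0^*$ and $i_1^*$ invert $\pi^*$ on cohomology, hence agree, giving $[BCh_H(\nabla_0^E,\nabla_0^{E'})]=[BCh_H(\nabla_1^E,\nabla_1^{E'})]$. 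With that correction your homotopy argument works; the paper's explicit $BCS_H$ form has the advantage of producing a concrete primitive, which is needed for some of the later T-duality statements.
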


\begin{proof}(i) To prove
$$(\nabla^{\cL^B} -u i_K +u^{-1} \bar H)BCh_H(\nabla^E, \nabla^{E'})=0,$$
we only have to show that
\be  \label{localclosed} (d-i_K \bar B_\alpha-ui_K+u^{-1}\bar H)\left(\left(1+\sum_{n=1}^\infty {(-u)}^{-n}\int\limits_{0\leq s_1\leq s_2\leq \cdots\leq s_n\leq 1}\widehat{B_\alpha }_{s_1}\widehat{B_\alpha }_{s_2}\cdots\widehat{B_\alpha }_{s_n}\right)\cdot\omega_\alpha\right)=0. \ee

From Bismut's result in \cite{B85}, we know that
\be 
(d-ui_K)\omega_\alpha=0.
\ee

Moreover
\be \label{d-iu_K}
\begin{split}
&(d-ui_K)\left(1+ \sum_{n=1}^\infty{(-u)}^{-n}\int\limits_{0\leq s_1\leq \cdots\leq s_n\leq 1}\widehat{B_\alpha }_{s_1}\cdots\widehat{B_\alpha }_{s_n}\right)\\
=&\sum_{n=1}^\infty{(-u)}^{-n}\sum\limits_{i=1}^n\int\limits_{0\leq s_1\leq \cdots\leq s_n\leq 1}\widehat{B_\alpha }_{s_1}\cdots\widehat{dB_\alpha }_{s_i}\cdots\widehat{B_\alpha }_{s_n}\\
&+\sum_{n=1}^\infty (-u)^{-n+1}\sum\limits_{i=1}^n\int\limits_{0\leq s_1\leq \cdots\leq s_n\leq 1}\widehat{B_\alpha }_{s_1}\cdots\widehat{i_K B_\alpha }_{s_i}\cdots\widehat{B_\alpha }_{s_n}\\
=&\sum_{n=1}^\infty{(-u)}^{-n}\sum\limits_{i=1}^n\int\limits_{0\leq s_1\leq \cdots\leq s_n\leq 1}\widehat{B_\alpha }_{s_1}\cdots\widehat{H}_{s_i}\cdots\widehat{B_\alpha }_{s_n}\\
&+\sum_{n=1}^\infty (-u)^{-n+1}\sum\limits_{i=1}^n\int\limits_{0\leq s_1\leq \cdots\leq s_n\leq 1}\widehat{B_\alpha }_{s_1}\cdots\widehat{i_K B_\alpha }_{s_i}\cdots\widehat{B_\alpha }_{s_n}\\
=&\left(-u^{-1}\int_0^1\widehat{H}_{s}ds\right) \left(1+ \sum_{n=1}^\infty{(-u)}^{-n}\int\limits_{0\leq s_1\leq \cdots\leq s_n\leq 1}\widehat{B_\alpha }_{s_1}\widehat{B_\alpha }_{s_2}\cdots\widehat{B_\alpha }_{s_n}\right)\\
&+\left(\int_0^1\widehat{i_K B_\alpha}_{s}ds\right) \left(1+ \sum_{n=1}^\infty{(-u)}^{-n}\int\limits_{0\leq s_1\leq \cdots\leq s_n\leq 1}\widehat{B_\alpha }_{s_1}\cdots\widehat{B_\alpha }_{s_n}\right).\\
\end{split}
\ee

So
\be
\begin{split}
\label{dofHB}
&(d-ui_K)\left(1+ \sum_{n=1}^\infty{(-u)}^{-n}\int\limits_{0\leq s_1\leq \cdots\leq s_n\leq 1}\widehat{B_\alpha }_{s_1}\cdots\widehat{B_\alpha }_{s_n}\right)\\
=&(i_K \bar B-u^{-1}\bar H)\left(1+ \sum_{n=1}^\infty{(-u)}^{-n}\int\limits_{0\leq s_1\leq \cdots\leq s_n\leq 1}\widehat{B_\alpha }_{s_1}\cdots\widehat{B_\alpha }_{s_n}\right).
\end{split}
\ee
Therefore it is clear that (\ref{localclosed}) holds.

$\, $

\noindent (ii) Now suppose we have two pairs of gerbe module connections $(\nabla_0^E, \nabla_0^{E'})$ and $(\nabla_1^E, \nabla_1^{E'})$.
Let $\nabla_t^E=(1-t)\nabla_0^E+t\nabla_1^E$ and $\nabla_t^{E'}=(1-t)\nabla_0^{'E}+t\nabla_1^{E'}$.  Denote the curvatures of these two connections by $F^{E,t}$ and $F^{E', t}$.

Let $A^E=\nabla_1^E-\nabla_0^E$ and $A^{E'}=\nabla_1^{E'}-\nabla_0^{E'}$.  Actually $A^E=\{A^E_\alpha\}$ with the relation $\phi_{\alpha\beta}^{-1}(A^E_\alpha) \phi_{\alpha\beta} =  A^E_\beta$.
Similar relations hold for $\{A^{E'}_\alpha\}$.

Let $\eta_\alpha \in \Omega^\bu(LU_\alpha)^\TT[[u, u^{-1}]]$ be defined by (the invariance can be seen from (55))
$$ $$
$$ $$
\h
\eta_\alpha =\mathrm{Tr}\left[ -u^{-1}\! \left(\int_0^1{}^{{E^t_\alpha}}\tau^s_0(\widehat{A^E_\alpha}_s) ds\!\circ\!\! \left(\! \!  I+\! \sum_{n=1}^\infty {(-u)}^{-n}\!\! \! \! \! \! \! \! \!\! \! \! \!  \!\int\limits_{0\leq s_1\leq \cdots\leq s_n\leq 1}\!\!\!\!\!\!\!\!{}^{E^t_\alpha}\tau^{s_1}_0(\widehat{F^{E,t}_{\alpha}}_{s_1})\! \circ \! \cdots\circ {}^{E^t_\alpha}\tau^{s_n}_0(\widehat{F^{E,t}_\alpha}_{s_n})\right)\right)\!\circ\! {}^{{E^t_\alpha}}\tau^1_0 \right.
\e
\h
\left.\!\! \!+u^{-1}\!\! \left(\! \! \int_0^1{}^{E'^t_\alpha}\tau^s_0(\widehat{A^{E'}_\alpha}_s) ds \!\circ\!\! \left(\! \! I\!+\!\sum_{n=1}^\infty {(-u)}^{-n}\! \! \! \!\! \! \! \! \!\! \! \! \! \! \int\limits_{0\leq s_1\leq \cdots\leq s_n\leq 1}\!\!\!\!\!\!\!\!{}^{E'^t_\alpha}\tau^{s_1}_0(\widehat{F^{E',t}_\alpha}_{s_1})\!\circ\! \cdots\circ {}^{E'^t_\alpha}\tau^{s_n}_0(\widehat{F^{E',t}_\alpha }_{s_n})\right)\right)\!\circ\! {}^{E'^t_\alpha}\tau^1_0 \right].
\e

Define an element $BCS_{H, \alpha}(\nabla_0^E, \nabla_0^{E'}; \nabla_1^E, \nabla_1^{E'})\in \Omega^\bu(LU_\alpha, \cL^B)^\TT[[u, u^{-1}]]$ by

\be \label{localBCS}
\begin{split}
&BCS_{H, \alpha}(\nabla_0^E, \nabla_0^{E'}; \nabla_1^E, \nabla_1^{E'})\\
=&\left(1+\sum_{n=1}^\infty {(-u)}^{-n}\int\limits_{0\leq s_1\leq \cdots\leq s_n\leq 1}\widehat{B_\alpha }_{s_1}\cdots\widehat{B_\alpha }_{s_n}\right)\cdot\eta_\alpha\otimes\sigma_\alpha\\
=&\left(1+\sum_{n=1}^\infty {(-u)}^{-n}\int\limits_{0\leq s_1\leq \cdots\leq s_n\leq 1}\widehat{B_\alpha }_{s_1}\cdots\widehat{B_\alpha }_{s_n}\right)\\
\end{split}
\ee

\h
\cdot  \! \mathrm{Tr}\left[ -u^{-1}\! \left(\int_0^1{}^{{E^t_\alpha}}\tau^s_0(\widehat{A^E_\alpha}_s) ds\!\circ\!\! \left(\! \!  I+\! \sum_{n=1}^\infty {(-u)}^{-n}\!\! \! \! \! \! \! \! \int\limits_{0\leq s_1\leq \cdots\leq s_n\leq 1}\!\!\!\!\!\!\!\!{}^{E^t_\alpha}\tau^{s_1}_0(\widehat{F^{E,t}_{\alpha}}_{s_1})\! \circ \! \cdots\circ {}^{E^t_\alpha}\tau^{s_n}_0(\widehat{F^{E,t}_\alpha}_{s_n})\right)\right)\!\circ\! {}^{{E^t_\alpha}}\tau^1_0 \right.
\e
\h
\left.\!\! \!+u^{-1}\!\! \left(\! \! \int_0^1{}^{E'^t_\alpha}\tau^s_0(\widehat{A^{E'}_\alpha}_s) ds \!\circ\!\! \left(\! \! I\!+\!\sum_{n=1}^\infty {(-u)}^{-n}\! \! \! \!\! \! \! \! \int\limits_{0\leq s_1\leq \cdots\leq s_n\leq 1}\!\!\!\!\!\!\!\!{}^{E'^t_\alpha}\tau^{s_1}_0(\widehat{F^{E',t}_\alpha}_{s_1})\!\circ\! \cdots\circ {}^{E'^t_\alpha}\tau^{s_n}_0(\widehat{F^{E',t}_\alpha }_{s_n})\right)\right)\!\circ\! {}^{E'^t_\alpha}\tau^1_0 \right]\otimes \sigma_\alpha.
\e

Similar to what we did in {\ref{BChMatch}}, we can show that $\{BCS_{H, \alpha}(\nabla_0^E, \nabla_0^{E'}; \nabla_1^E, \nabla_1^{E'})\}$ patch together to give us a global form $BCS_{H}(\nabla_0^E, \nabla_0^{E'}; \nabla_1^E, \nabla_1^{E'})\in \Omega^\bu(LZ, \cL^B)^\TT[[u, u^{-1}]].$  We call this form the {\it twisted Bismut-Chern-Simons transgression term}.

From Bismut's result in \cite{B85}, we know that
\be
\omega_\alpha(\nabla_1^E, \nabla_1^{E'})-\omega_\alpha(\nabla_0^E, \nabla_0^{E'})=(d-ui_K)\eta_\alpha.
\ee
Combining (\ref{dofHB}), we see that
\be
\begin{split}
&\left(1+\sum_{n=1}^\infty {(-u)}^{-n}\int\limits_{0\leq s_1\leq \cdots\leq s_n\leq 1}\widehat{B_\alpha }_{s_1}\cdots\widehat{B_\alpha }_{s_n}\right)\cdot\omega_\alpha(\nabla_1^E, \nabla_1^{E'})\\
-&\left(1+\sum_{n=1}^\infty {(-u)}^{-n}\int\limits_{0\leq s_1\leq \cdots\leq s_n\leq 1}\widehat{B_\alpha }_{s_1}\cdots\widehat{B_\alpha }_{s_n}\right)\cdot \omega_\alpha(\nabla_0^E, \nabla_0^{E'})\\
=&(d-i_K \bar B_\alpha-ui_K+u^{-1}\bar H)\left( \left(1+\sum_{n=1}^\infty {(-u)}^{-n}\int\limits_{0\leq s_1\leq \cdots\leq s_n\leq 1}\widehat{B_\alpha }_{s_1}\cdots\widehat{B_\alpha }_{s_n}\right)\cdot\eta_\alpha\right).
\end{split}
\ee

Therefore, we have
\be
\begin{split}
&BCh_H(\nabla_1^E, \nabla_1^{E'})-BCh_H(\nabla_0^E, \nabla_0^{E'})\\
=&(\nabla^{\cL^B} -u i_K +u^{-1} \bar H)BCS_{H}(\nabla_0^E, \nabla_0^{E'}; \nabla_1^E, \nabla_1^{E'}).
\end{split}
\ee

\end{proof}

If we change the curving of the gerbe from $\{B_\alpha\}$ to $\{B_\alpha+Q\}$, where $Q\in \Omega^2(Z)$, then the holonomy line bundle $\cL^B$ is unchanged while the connection $\nabla^{\cL^B}$ is change to $\nabla^{\cL^B}-\tau(Q)$ and  the 3-curvature of the gerbe is changed to $H+dQ$. So the differential $\nabla^{\cL^B} -u i_K +u^{-1} \bar H$ is changed to $\nabla^{\cL^B} -\tau(Q)- ui_K + u^{-1}\overline{H+dQ}.$

Consider the map 
\be \gV_Q: \Omega^\bullet(LZ, \cL^B)^\TT[[u, u^{-1}]]\to \Omega^\bullet(LZ, \cL^B)^\TT[[u, u^{-1}]]\ee such that 
$$ \gV_Q(\omega)=\left(\!1\!+\!\sum_{n=1}^\infty {(-u)}^{-n}\!\!\!\!\!\!\!\!\!\!\!\!\int\limits_{0\leq s_1\leq \cdots\leq s_n\leq 1}\!\!\!\!\!\!\widehat{Q }_{s_1}\cdots\widehat{Q }_{s_n}\right)\cdot \omega.$$

Suppose $(\nabla^{\cL^B} -u i_K +u^{-1} \bar H)\omega=0$, then (similar to (\ref{d-iu_K}))
\be
\begin{split}
&(\nabla^{\cL^B} -\tau(Q)- ui_K + u^{-1}\overline{H+dQ})\left(\left(\!1\!+\!\sum_{n=1}^\infty {(-u)}^{-n}\!\!\!\!\!\!\!\!\!\!\!\!\int\limits_{0\leq s_1\leq \cdots\leq s_n\leq 1}\!\!\!\!\!\!\widehat{Q }_{s_1}\cdots\widehat{Q }_{s_n}\right)\cdot \omega\right)\\
=&(-\tau(Q) + u^{-1}\overline{dQ})\left(\!1\!+\!\sum_{n=1}^\infty {(-u)}^{-n}\!\!\!\!\!\!\!\!\!\!\!\!\int\limits_{0\leq s_1\leq \cdots\leq s_n\leq 1}\!\!\!\!\!\!\widehat{Q }_{s_1}\cdots\widehat{Q }_{s_n}\right)\cdot \omega\\
&+\left((d-ui_K)\left(\!1\!+\!\sum_{n=1}^\infty {(-u)}^{-n}\!\!\!\!\!\!\!\!\!\!\!\!\int\limits_{0\leq s_1\leq \cdots\leq s_n\leq 1}\!\!\!\!\!\!\widehat{Q }_{s_1}\cdots\widehat{Q }_{s_n}\right)\right)\cdot \omega\\
=&0.
\end{split}
\ee

This shows that we have an isomorphism (still denoted by $\gV_Q$)
$$\gV_Q:  h^\bu_\TT(LZ, \nabla^{\cL^B}: \bar H)\to  h^\bu_\TT(LZ, (\nabla^{\cL^B}-\tau(Q)): \overline {H+dQ}).$$

It is clear that when restricted to constant loops, we get an isomorphism
$$ V_Q: H(\Omega^\bu(Z)[[u, u^{-1}]], d+u^{-1}H) \to H(\Omega^\bu(Z)[[u, u^{-1}]], d+u^{-1}(H+dQ)),$$
sending $\eta$ to $ e^{-u^{-1}Q}\cdot\eta.$ 

Moreover, we see that 
\be 
\begin{split}
&BCh_{H+dQ, \alpha}(\nabla^E, \nabla^{E'})\\
=&\left(1+\sum_{n=1}^\infty {(-u)}^{-n}\int\limits_{0\leq s_1\leq \cdots\leq s_n\leq 1}\widehat{B_\alpha+Q }_{s_1}\cdots\widehat{B_\alpha +Q}_{s_n}\right)\\
&\cdot \mathrm{Tr}\left[ \left(I+\sum_{n=1}^\infty {(-u)}^{-n}\int\limits_{0\leq s_1\leq \cdots\leq s_n\leq 1}{}^{E_\alpha}\tau^{s_1}_0(\widehat{F^E_\alpha}_{s_1})\circ \cdots\circ {}^{E_\alpha}\tau^{s_n}_0(\widehat{F^E_\alpha }_{s_n})\right)\circ {}^{E_\alpha}\tau^1_0 \right.\\
&\left.\ \ \ \ \ - \left(I+\sum_{n=1}^\infty {(-u)}^{-n}\int\limits_{0\leq s_1\leq \cdots\leq s_n\leq 1}{}^{E_\alpha'}\tau^{s_1}_0(\widehat{F^{E'}_\alpha }_{s_1})\circ \cdots\circ {}^{E_\alpha'}\tau^{s_n}_0(\widehat{F^{E'}_\alpha }_{s_n})\right)\circ {}^{E_\alpha'}\tau^1_0  \right]\otimes \sigma_\alpha\\
=&\left(\!1\!+\!\sum_{n=1}^\infty {(-u)}^{-n}\!\!\!\!\!\!\!\!\!\!\!\!\int\limits_{0\leq s_1\leq \cdots\leq s_n\leq 1}\!\!\!\!\!\!\widehat{Q }_{s_1}\cdots\widehat{Q }_{s_n}\right)\left(1+\sum_{n=1}^\infty {(-u)}^{-n}\int\limits_{0\leq s_1\leq \cdots\leq s_n\leq 1}\widehat{B_\alpha}_{s_1}\cdots\widehat{B_\alpha}_{s_n}\right)\\
&\cdot \mathrm{Tr}\left[ \left(I+\sum_{n=1}^\infty {(-u)}^{-n}\int\limits_{0\leq s_1\leq \cdots\leq s_n\leq 1}{}^{E_\alpha}\tau^{s_1}_0(\widehat{F^E_\alpha}_{s_1})\circ \cdots\circ {}^{E_\alpha}\tau^{s_n}_0(\widehat{F^E_\alpha }_{s_n})\right)\circ {}^{E_\alpha}\tau^1_0 \right.\\
&\left.\ \ \ \ \ - \left(I+\sum_{n=1}^\infty {(-u)}^{-n}\int\limits_{0\leq s_1\leq \cdots\leq s_n\leq 1}{}^{E_\alpha'}\tau^{s_1}_0(\widehat{F^{E'}_\alpha }_{s_1})\circ \cdots\circ {}^{E_\alpha'}\tau^{s_n}_0(\widehat{F^{E'}_\alpha }_{s_n})\right)\circ {}^{E_\alpha'}\tau^1_0  \right]\otimes \sigma_\alpha\\
=&\gV_Q(BCh_{H, \alpha}(\nabla^E, \nabla^{E'})).
\end{split}
\ee
So we have
\be  \gV_Q(BCh_{H}(\nabla^E, \nabla^{E'}))=BCh_{H+dQ}(\nabla^E, \nabla^{E'}). \ee

It is also clear that
\be  V_Q(Ch_{H}(\nabla^E, \nabla^{E'}))=Ch_{H+dQ}(\nabla^E, \nabla^{E'}). \ee

Therefore we have the commutative diagram
\begin{equation}
\xymatrix 
@=4pc 
{ K^\bullet(Z, H) \ar@/_8pc/[dd]_{Ch_H}\ar[d]_{BCh_H}  \ar[r]^{\otimes \text{trivial gerbe with curving}\,\,\,  Q}
_{\text{and 3-curvature}\,\,\, dQ}
&
K^{\bullet }(Z, H+dQ)   \ar@/^8pc/[dd]^{Ch_{H+dQ}}\ar[d]^{BCh_{H+dQ}}   \\ h_\TT^\bullet(LZ, \nabla^{\cL^B}:\bar H)  \ar[r]^{\gV_Q}_{\cong} \ar[d]_{res}^\cong & 
h^\bu_\TT(LZ, (\nabla^{\cL^B}-\tau(Q)): \overline {H+dQ})\ar[d]^{res}_{\cong} \\
H^\bu(\Omega(Z)[[u, u^{-1}]], d+u^{-1}H) \ar[r]^{V_Q}_{\cong}& H(\Omega^\bu(Z)[[u, u^{-1}]], d+u^{-1}(H+dQ))
}
\end{equation}

\section{Relation to T-duality}
We apply our previous results to the study of T-duality. First we review the results
in  \cite{BEM04a, BEM04b}, where the following situation is studied.

\subsection{Review of T-duality for principal circle bundles in a background flux}

In \cite{BEM04a, BEM04b}, spacetime $Z$ was compactified in one direction.
More precisely, $Z$
is a principal $\bbT$-bundle over $X$

\begin{equation}\label{eqn:MVBx}
\begin{CD}
\bbT @>>> Z \\
&& @V\pi VV \\
&& X \end{CD}
\end{equation}
 classified up to isomorphism by its first Chern class
{ $c_1(Z)\in H^2(X,\ZZ)$}. Assume that spacetime $Z$ is endowed with an $H$-flux which is
a representative in the
degree 3 Deligne cohomology of $Z$, that is
$H\in\Omega^3(Z)$ with integral periods (for simplicity, we drop factors of $\frac{1}{2\pi i}$),
together with
the following data. Consider a local trivialization $U_\alpha \times \TT$ of $Z\to X$, where
$\{U_\alpha\}$ is a good cover of $X$. Let $H_\alpha = H\Big|_{ U_\alpha \times \TT}
= d B_\alpha$, where $B_\alpha \in \Omega^2(U_\alpha \times \TT)$ and finally, $B_\alpha -B_\beta = F_{\alpha\beta}
\in \Omega^1(U_{\alpha\beta} \times \TT)$.
 Then the choice of $H$-flux entails that we are given a local trivialization
 as above and locally defined 2-forms $B_\alpha$ on it, together with closed 2-forms $F_{\alpha\beta}$ defined on double overlaps,  that is, $(H, B_\alpha, F_{\alpha\beta})$. Also the first Chern class
 of $Z\to X$
  is represented in integral cohomology by $(F, A_\alpha)$ where
$\{A_\alpha\}$ is a connection 1-form on $Z\to X$ and $F = dA_\alpha$ is the curvature 2-form of $\{A_\alpha\}$.

The {  {T-dual}}  is another principal
$\bbT$-bundle over $M$, denoted by $\hat Z$,
  {}
\begin{equation}\label{eqn:MVBy}
\begin{CD}
\hat \bbT @>>> \hat Z \\
&& @V\hat \pi VV     \\
&& X \end{CD}
\end{equation}
To define it, we see that $\pi_* (H_\alpha) = d \pi_*(B_\alpha) = d {\hat A}_\alpha$,
 so that $\{{\hat A}_\alpha\}$ is a connection 1-form whose curvature $ d {\hat A}_\alpha = \hat F_\alpha =  \pi_*(H_\alpha)$
 that is, $\hat F = \pi_* H$. So let $\hat Z$ denote the principal
$\bbT$-bundle over $M$ whose first Chern class is  $\,\, c_1(\hat Z) = [\pi_* H, \pi_*(B_\alpha)] \in H^2(X; \ZZ) $.

The Gysin
sequence for $Z$ enables us to define a T-dual $H$-flux
$[\hat H]\in H^3(\hat Z,\ZZ)$, satisfying
\begin{equation} \label{eqn:MVBc}
c_1(Z) = \hat \pi_* \hat H \,,
\end{equation}
 where $\pi_* $
and similarly $\hat\pi_*$, denote the pushforward maps.
Note that $ \hat H$ is not fixed by this data, since any integer
degree 3 cohomology class on $X$ that is pulled back to $\hat Z$
also satisfies the requirements. However, $ \hat H$ is
determined uniquely (up to cohomology) upon imposing
the condition $[H]=[\hat H]$ on the correspondence space $Z\times_X \hat Z$
as will be explained now.

The {\em correspondence space} (sometimes called the doubled space) is defined as
$$
Z\times_X  \hat Z = \{(x, \hat x) \in Z \times \hat Z: \pi(x)=\hat\pi(\hat x)\}.
$$
Then we have the following commutative diagram,
\begin{equation*} \label{eqn:correspondence}
\xymatrix @=6pc @ur { (Z, [H]) \ar[d]_{\pi} &
(Z\times_X  \hat Z, [H]=[\hat H]) \ar[d]_{\hat p} \ar[l]^{p} \\ X & (\hat Z, [\hat H])\ar[l]^{\hat \pi}}
\end{equation*}
By requiring that
$$
p^*[H]={\hat p}^*[\hat H] \in H^3(Z\times_X  \hat Z, \ZZ),
$$
determines $[\hat H] \in H^3(  \hat Z, \ZZ)$  uniquely, via an application of the Gysin sequence.
An alternate way to see this is is explained below.

Let $(H, B_\alpha, F_{\alpha\beta}, L_{\alpha\beta})$ denote a gerbe with connection on $Z$.
We also choose a connection 1-form $A$ on $Z$.
Let $v$ denote the vectorfield generating the $S^1$-action on $Z$.
Then define $\widehat A_\alpha = -\imath_v B_\alpha$ on the chart $U_\alpha$ and
the connection 1-form $\widehat A= \widehat A_\alpha +d\widehat\theta_\alpha$
on the chart $U_\alpha\times  \widehat \TT$. In this way we get a T-dual circle bundle
$\widehat Z \to X$ with connection 1-form $\widehat A$.

Without loss of generality, we can assume that $H$ is $\TT$-invariant. Consider
$$
\Omega = H - A\wedge F_{\widehat A}
$$
where  $F_{\widehat A} = d {\widehat A}$ and $F_{A} = d {A}$ are the curvatures of $A$
and $\widehat A$ respectively. One checks that the contraction $i_v(\Omega)=0$ and
the Lie derivative $L_v(\Omega)=0$ so that $\Omega$ is a basic 3-form on $Z$, that is
$\Omega$ comes from the base $X$.

Setting
$$
\widehat H = F_A\wedge {\widehat A} + \Omega
$$
this defines the T-dual flux 3-form. One verifies that $\widehat H$ is a closed 3-form on $\widehat Z$.
It follows that on the correspondence space, one has as desired,
\begin{equation}
\widehat H = H + d (A\wedge \widehat A ).
\end{equation}

Our next goal is to determine the T-dual curving or B-field.
The Buscher rules imply that on the open sets $U_\alpha \times \TT\times \widehat \TT$ of the
correspondence space $Z\times_X \hat Z$, one has
\begin{equation}
\widehat B_\alpha = B_\alpha + A\wedge \widehat A - d\theta_\alpha \wedge d\widehat \theta _\alpha\,,
\end{equation}
Note that
\begin{equation}
\imath_v \widehat B_\alpha = \imath_v
\left( B_\alpha + A\wedge \widehat A - d\theta_\alpha \wedge d\widehat \theta _\alpha\right) =
-\widehat A_\alpha + \widehat A - d\widehat \theta_\alpha = 0
\end{equation}
so that $\widehat B_\alpha$ is indeed a 2-form on $\widehat Z$ and not just on the correspondence
space. Obviously, $d \widehat B_\alpha = \widehat H$. Following the descent equations one arrives at the complete
T-dual gerbe with connection, $(\widehat H, \widehat B_\alpha, \widehat F_{\alpha\beta}, \widehat L_{\alpha\beta})$.
cf. \cite{BMPR}.

The rules
for transforming the Ramond-Ramond (RR) fields can be encoded in the {\cite{BEM04a, BEM04b}} generalization of
{\em Hori's formula}
  {}
\begin{equation} \label{eqn:Hori}
T_*G =  \int_{\bbT} e^{ A \wedge \hat A }\ G \,,
\end{equation}
 where $G \in \Omega^\bullet(Z)^\bbT$ is the total RR fieldstrength,
\begin{center}
$G\in\Omega^{even}(Z)^\bbT \quad$ for {   { Type IIA}};\\
$G\in\Omega^{odd}(Z)^\bbT \quad$ for {   { Type IIB}},\\
\end{center}
and where the right hand side of equation \eqref{eqn:Hori} is an invariant differential form on $Z\times_X\hat Z$, and
the integration is along the $\bbT$-fiber of $Z$.


Recall that the twisted cohomology
is defined as the cohomology of the complex
$$H^\bullet(Z, H) = H^\bullet(\Omega^\bullet(Z), d_H=d+ H\wedge).$$
By the identity \eqref{eqn:Hori}, $T_*$ maps $d_H$-closed forms $G$ to $d_{\hat
H}$-closed forms $T_*G$.
 So T-duality $T_*$  induces a map on twisted cohomologies,
$$
T : H^\bullet(Z, H) \to H^{\bullet +1}(\hat Z, \hat H).
$$
Define the Riemannian metrics on $Z$ and $\hat Z$ respectively by
$$
g=\pi^*g_X+R^2\, A\odot A,\qquad \hat g=\hat\pi^*g_X+1/{R^2} \,\hat A\odot\hat A.
$$
where $g_X$ is a Riemannian metric on $X$.
 Then $g$ is $\TT$-invariant and the length of each circle fibre is $R$; $\hat g$
 is $\hat\TT$-invariant and the length of each circle fibre is $1/R$.

 The following theorem summarizes the main consequence of T-duality for principal circle bundles in a background flux.

\begin{theorem}[T-duality isomorphism \cite{BEM04a,BEM04b}]\label{thm:T-duality}
In the notation above, and
with the above choices of Riemannian metrics and flux forms, the map \eqref{eqn:Hori}
$$
T\colon\Omega^{\bar k}(Z)^\TT\to\Omega^{\overline{k+1}}(\hat Z)^{\hat\TT},
$$
for $k=0,1$, (where $\bar k$ denotes the parity of $k$) are isometries, inducing isomorphisms on twisted
cohomology groups,
\begin{equation}\label{T-duality-coh}
T : H^\bullet(Z, H) \stackrel{\cong}{\longrightarrow} H^{\bullet +1}(\hat Z, \hat H).
\end{equation}
Therefore under T-duality one has the exchange,
\begin{center}
{$R \Longleftrightarrow 1/R$}\quad and \quad
{{   {background H-flux} $\Longleftrightarrow$   {Chern class}}}
\end{center}
Moreover there is also an isomorphism of twisted K-theories,
\begin{equation}\label{T-duality-K}
T : K^\bullet(Z, H) \to K^{\bullet +1}(\hat Z, \hat H),
\end{equation}
such that the following diagram commutes,
\begin{equation}\label{T-duality-commute}
\xymatrix @=4pc 
{ K^\bullet(Z, H) \ar[d]_{Ch_H}  \ar[r]^{T} &
K^{\bullet +1}(\hat Z, \hat H)  \ar[d]^{Ch_{\hat H}}   \\ H^\bullet(Z, H)  \ar[r]_{T}& H^{\bullet +1}(\hat Z, \hat H)}
\end{equation}

\end{theorem}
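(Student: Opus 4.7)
The plan is to prove the four assertions of Theorem \ref{thm:T-duality} in four stages: (i) verify that the Hori map intertwines the twisted differentials; (ii) construct an explicit inverse and check the isometry claim; (iii) upgrade the cohomological isomorphism to twisted $K$-theory; and (iv) verify compatibility with the Chern character.

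For (i), the key observation is that multiplication by $e^{A \wedge \hat A}$ on the correspondence space conjugates the exterior derivative as $e^{-A\wedge \hat A}\, d\, e^{A \wedge \hat A} = d + d(A \wedge \hat A) \wedge$. Combined with the identity $\hat p^*\hat H - p^* H = d(A \wedge \hat A)$ established in the setup, this shows that $e^{A \wedge \hat A}$ intertwines the pulled-back twisted differentials $d_{p^* H}$ and $d_{\hat p^* \hat H}$ on $\Omega^\bullet(Z \times_X \hat Z)$. Since $p^*$ intertwines $d_H$ with $d_{p^*H}$ tautologically, and fiber integration $\hat p_*$ intertwines $d_{\hat p^*\hat H}$ with $d_{\hat H}$ by Stokes' theorem for integration along the fiber together with the projection formula, the composite $T = \hat p_* \circ (e^{A \wedge \hat A}\,\cdot\,) \circ p^*$ automatically intertwines $d_H$ and $d_{\hat H}$, and therefore descends to twisted cohomology.

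For (ii), the symmetric Hori map $\hat T\colon \Omega^\bullet(\hat Z)^{\hat \TT} \to \Omega^{\bullet + 1}(Z)^\TT$ defined by $\hat T \hat G = p_*(e^{-A \wedge \hat A}\, \hat p^* \hat G)$ serves as an inverse up to sign. The composition $\hat T \circ T$ on a $\TT$-invariant form $G$ reduces, via base change and Fubini along the full torus fiber $\TT \times \hat \TT$ of $Z \times_X \hat Z \to X$, to a fiber integral of $p^*G$ whose only surviving contribution comes from the degree-top part $A \wedge \hat A$ of the integrand; sign-tracking then yields $G$. For the metric statement, decompose $\TT$-invariant forms on $Z$ as $\omega' + A \wedge \omega''$ with $\omega', \omega''$ horizontal. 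In the orthonormal frame adapted to the splitting $TZ = \pi^* TX \oplus \R v$ with $|v|_g = R$, the form $A$ has pointwise norm $1/R$, and dually $\hat A$ has norm $R$. A direct calculation shows the Hori integrand $e^{A \wedge \hat A}$ composed with $\hat p_*$ exchanges these two components with exactly the rescalings needed to make $T$ a pointwise isometry.

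The main obstacle is (iii), the K-theory isomorphism, since twisted K-theory admits no Hori-style de Rham formula. The cleanest route is the C*-algebraic one of \cite{BEM04a, BEM04b}: realize $K^\bullet(Z,H) \cong K_\bullet(\mathrm{CT}(Z,H))$ as the K-theory of the continuous-trace C*-algebra of the gerbe, observe that the $\TT$-action on $Z$ lifts to a twisted action on $\mathrm{CT}(Z, H)$ whose Mackey obstruction reproduces the Dixmier--Douady class $\hat H$ of $\hat Z$, and apply Connes' Thom isomorphism for the lifted $\R$-action, which shifts the degree by one. This yields the degree-shifting isomorphism onto $K_\bullet(\mathrm{CT}(\hat Z, \hat H)) \cong K^{\bullet + 1}(\hat Z, \hat H)$. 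Finally, commutativity of \eqref{T-duality-commute} is a Chern--Weil computation: the Chern character of the Poincar\'e line bundle on $Z \times_X \hat Z$ trivializing $p^* H - \hat p^* \hat H$ has de Rham representative $e^{A \wedge \hat A}$, so $Ch_{\hat H} \circ T = \hat p_* \circ (e^{A \wedge \hat A}\,\cdot\,) \circ p^* \circ Ch_H = T \circ Ch_H$ by naturality of the Chern character and the projection formula. The analytic content of (iii) is the only step that requires input beyond classical Chern--Weil techniques; everything else is a functorial consequence.
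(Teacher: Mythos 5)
This statement is Theorem~\ref{thm:T-duality}, which the paper does not prove: it is explicitly labeled ``[T-duality isomorphism \cite{BEM04a,BEM04b}]'' and appears under the subsection ``Review of T-duality for principal circle bundles in a background flux.'' The authors simply recall the relevant data (the dual connection form $\widehat A_\alpha = -\imath_v B_\alpha$, the dual flux $\widehat H = F_A\wedge\widehat A + \Omega$, the identity $\widehat H = H + d(A\wedge\widehat A)$ on the correspondence space, the Hori map, and the choice of $R$- and $1/R$-metrics), and then cite \cite{BEM04a,BEM04b} for the theorem itself; there is no argument in this paper to compare against.

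That said, your reconstruction is essentially the argument given in those references, and it is sound in outline. Part (i) — conjugating $d$ by $e^{A\wedge\widehat A}$ on the correspondence space and using $\widehat p^*\widehat H - p^*H = d(A\wedge\widehat A)$ — is the right mechanism, though you should be careful with sign conventions; the verification that $T_*$ intertwines $d_H$ with $d_{\widehat H}$ requires tracking both the sign convention in $d_H = d + H\wedge$ and the orientation/sign coming from integration along the $\TT$-fiber, and under the paper's stated conventions a na\"ive conjugation by $e^{A\wedge\widehat A}$ produces a residual $d(A\wedge\widehat A)$ term that is killed only after the fiber integration is taken into account; this is a bookkeeping point but one that should be made explicit. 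Part (ii), the decomposition of $\TT$-invariant forms as $\omega' + A\wedge\omega''$ with $|A|_g = 1/R$ and $|\widehat A|_{\widehat g} = R$, is exactly how the isometry is established in \cite{BEM04a}. Part (iii) is the genuinely non--de Rham ingredient; your appeal to the continuous-trace $C^*$-algebra picture (the Raeburn--Rosenberg realization of the T-dual as a crossed product, with the degree shift from Connes' Thom isomorphism) is the correct and, as far as I know, only available route, and it is indeed the one used in the cited references. Part (iv), commutativity via the Chern--Weil representative $e^{A\wedge\widehat A}$ of the Poincar\'e bundle and the projection formula, is standard. In short: the proposal is a faithful sketch of the proof that the present paper is citing rather than proving, modulo the sign-tracking caveat above.
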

The surprising {\em  { {new}}} phenomenon discovered in \cite{BEM04a,BEM04b} is that there is a
{\em   {  change in topology}} when the $H$-flux is non-trivial.

\subsection{T-duality: a loop space perspective}

As a consequence of our Localisation Theorem \ref{thm:localisation}, properties of the twisted Bismut-Chern 
character in section 3, and Theorem \ref{thm:T-duality}, 
we obtain a T-duality isomorphism of completed periodic exotic twisted $\TT$-equivariant cohomologies,

\begin{corollary}[T-duality and completed periodic exotic twisted $\TT$-equivariant cohomologies]  In the notation above, there is an isomorphism
\beq
T: h_\TT^\bullet(LZ, \nabla^{\cL^B}:\bar H) \stackrel{\cong}{\longrightarrow} h_\TT^{\bullet+1}(L{\widehat Z}, \nabla^{{\cL^{\widehat B}}}:\bar{\widehat H}),
\eeq
such that the following diagram commutes,
\begin{equation}\label{T-duality-commute2}
\xymatrix @=4pc 
{ K^\bullet(Z, H)  \ar@/_8pc/[dd]_{Ch_H}\ar[d]_{BCh_H}  \ar[r]^{T}_{\cong} &
K^{\bullet +1}(\hat Z, \hat H)  \ar@/^8pc/[dd]^{Ch_{\hat H}} \ar[d]^{BCh_{\hat H}}   \\ h_\TT^\bullet(LZ, \nabla^{\cL^B}:\bar H)  \ar[r]_{T} \ar[d]_{res}^\cong & 
h_\TT^{\bullet+1}(L{\widehat Z}, \nabla^{{\cL^{\widehat B}}}:\bar{\widehat H})\ar[d]^{res}_{\cong} \\
H^\bu(\Omega(Z)[[u, u^{-1}]], d+u^{-1}H) \ar[r]_{T}^{\cong}& H^{\bu+1}(\Omega(\widehat{Z})[[u, u^{-1}]], d+u^{-1}\widehat{H}) 
}
\end{equation}
\end{corollary}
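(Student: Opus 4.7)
The plan is to bootstrap from the Localisation Theorem together with the cohomological T-duality of Theorem \ref{thm:T-duality}. First, the Corollary to Theorem \ref{thm:localisation} supplies restriction isomorphisms
\[
res: h_\TT^\bullet(LZ, \nabla^{\cL^B}:\bar H) \xrightarrow{\cong} H^\bu(\Omega(Z)[[u,u^{-1}]], d+u^{-1}H),
\]
and similarly on $L\widehat{Z}$. Theorem \ref{thm:T-duality}, tensored with $\CC[[u,u^{-1}]]$ and applied with $H$ replaced by $u^{-1}H$ (to which Hori's integral transform is $u$-linearly equivariant), gives a $\ZZ/2$-graded isomorphism
\[
T_{\mathrm{coh}}: H^\bu(\Omega(Z)[[u,u^{-1}]], d+u^{-1}H) \xrightarrow{\cong} H^{\bu+1}(\Omega(\widehat Z)[[u,u^{-1}]], d+u^{-1}\widehat H).
\]

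With these at hand, I would define the loop space T-duality map by conjugation,
\[
T := res^{-1} \circ T_{\mathrm{coh}} \circ res,
\]
which is automatically an isomorphism and makes the bottom square of \eqref{T-duality-commute2} commute by construction. The left and right triangles of \eqref{T-duality-commute2} are precisely the two instances of diagram \eqref{twistedBC_H} applied to $(Z,H)$ and $(\widehat Z,\widehat H)$, and the outer square is \eqref{T-duality-commute}. Only the top square remains, and I would verify it by a diagram chase: for $x \in K^\bu(Z,H)$,
\[
res \circ T \circ BCh_H(x) = T_{\mathrm{coh}} \circ res \circ BCh_H(x) = T_{\mathrm{coh}} \circ Ch_H(x) = Ch_{\widehat H} \circ T_K(x) = res \circ BCh_{\widehat H} \circ T_K(x),
\]
using in turn the bottom square, \eqref{twistedBC_H} on $Z$, the outer square \eqref{T-duality-commute}, and \eqref{twistedBC_H} on $\widehat Z$. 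Cancelling the isomorphism $res$ on the left yields $T \circ BCh_H = BCh_{\widehat H} \circ T_K$, as required.

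The one point requiring attention---though not a serious obstacle---is the interaction between Hori's factor $e^{A \wedge \widehat A}$ in \eqref{eqn:Hori} and the change of curving $\widehat B_\alpha = B_\alpha + A \wedge \widehat A - d\theta_\alpha \wedge d\widehat\theta_\alpha$ prescribed on the correspondence space. This is precisely what the $\gV_Q$-formalism developed at the end of \S3 captures with $Q = A \wedge \widehat A$, showing that a shift of curving by an exact form is absorbed into multiplication by $e^{-u^{-1}Q}$ at the level of twisted (and completed periodic exotic twisted $\TT$-equivariant) cohomology. The more substantive conceptual question---which I would flag as a direction for further work rather than pursue here---is to realise $T$ directly at the loop space level by transgressing Hori's formula to $LZ \times_{LX} L\widehat Z$, bypassing the detour through constant loops; for the corollary as stated, however, the indirect definition through $res$ already delivers both the isomorphism and the commutativity of all squares in \eqref{T-duality-commute2}.
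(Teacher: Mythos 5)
Your proposal is correct and matches the paper's own strategy: the loop-space map $T$ is defined as the conjugate $res^{-1}\circ T_{\mathrm{coh}}\circ res$ of the cohomological T-duality isomorphism from Theorem~\ref{thm:T-duality}, using the Localisation Theorem for the two $res$ isomorphisms, and the commutativity of the full diagram then follows by the same diagram chase you carry out (bottom square by construction, side triangles from diagram~\eqref{twistedBC_H}, outer square from~\eqref{T-duality-commute}). The one place where your route deviates is in passing from $H^\bu(Z,H)$ to $H^\bu(\Omega(Z)[[u,u^{-1}]],d+u^{-1}H)$: you propose to do this by making Hori's transform $u$-equivariant (i.e.\ rescaling the exponential factor so it intertwines $d+u^{-1}H$ and $d+u^{-1}\widehat{H}$), whereas the paper instead proves Lemma~\ref{lem:equiv}, introducing the degree-dependent rescaling $c_u$ (multiplication by $u^{-[i/2]}$ on $i$-forms) to identify $H^\bu(\Omega(Z)[[u,u^{-1}]],d+u^{-1}H)\cong H^\bu(Z,H)[[u,u^{-1}]]$ and then applies Theorem~\ref{thm:T-duality} on the un-rescaled side. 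The two devices are interchangeable and have the same content. Your closing remark that the $\gV_Q$ formalism with $Q=A\wedge\widehat{A}$ captures exactly the change of curving implicit in Hori's exponential factor is an apt observation not spelled out in the text.
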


To complete the proof of the corollary, we prove the following lemma
\begin{lemma}\label{lem:equiv}
In the notation above, $H^\bu(\Omega(Z)[[u, u^{-1}]], d+u^{-1}H) \cong H^\bu(Z, H)[[u, u^{-1}]]$.
\end{lemma}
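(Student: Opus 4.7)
The plan is to compute both cohomology groups via the spectral sequences attached to the form-degree filtration and show that they match. Setting $A = (\Omega^\bu(Z)[[u,u^{-1}]],\, d+u^{-1}H)$, I would filter by $F^p A = \bigoplus_{k \geq p} \Omega^k(Z) \otimes_\C \C[[u,u^{-1}]]$. This decreasing filtration is bounded (since $\dim Z < \infty$) and preserved by $d + u^{-1}H$, because both $d$ and $H\wedge$ strictly raise form-degree. The induced differential on $F^p/F^{p+1} \cong \Omega^p(Z)[[u,u^{-1}]]$ vanishes, so $E_0 = E_1 = \Omega^\bu(Z)[[u,u^{-1}]]$, with $d_1 = d$ (the form-degree-$1$ component of $d+u^{-1}H$). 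Hence $E_2^{p,\bu} = H^p_{dR}(Z) \otimes_\C \C[[u,u^{-1}]]$. The differential has no form-degree-$2$ part, so $d_2 = 0$, while the form-degree-$3$ part $u^{-1}H\wedge$ yields $d_3 = u^{-1}[H]\wedge$ on de Rham cohomology.

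The analogous filtration on $B = (\Omega^\bu(Z),\, d+H) \otimes_\C \C[[u,u^{-1}]]$, whose cohomology is manifestly $H^\bu(Z,H)[[u,u^{-1}]]$, yields a spectral sequence with the same $E_2$ page, with $d_3 = [H] \wedge$ (no $u$-factor), and with higher Massey-product differentials $d_r$ ($r \geq 5$) whose terms differ from those in the spectral sequence of $A$ only by powers of the unit $u^{-1} \in \C[[u,u^{-1}]]$. Since multiplication by a unit preserves kernels and images, the $E_r$-pages of the two spectral sequences are isomorphic for every $r \geq 2$, and hence so are the $E_\infty$-pages. Both filtrations being bounded, the spectral sequences converge strongly; because $\C[[u,u^{-1}]]$ is a field (formal Laurent series in $u$ over $\C$), there is no extension obstruction, so $H^\bu(A) \cong H^\bu(B) = H^\bu(Z,H)[[u,u^{-1}]]$.

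A cochain-level realisation of this isomorphism is the rescaling $\iota(\omega_k) = u^{-\lfloor k/2 \rfloor} \omega_k$ on a pure-degree form $\omega_k \in \Omega^k(Z)$, extended additively and $\C[[u,u^{-1}]]$-linearly. Using the component-wise cocycle identity $d\omega_{k-1} + H\omega_{k-3} = 0$, one checks directly that $\iota$ sends $(d+H)$-cocycles to $(d+u^{-1}H)$-cocycles and $(d+H)$-coboundaries to $(d+u^{-1}H)$-coboundaries, so it descends to a well-defined map on cohomology, which the above spectral-sequence comparison identifies with the asserted isomorphism. The principal obstacle I anticipate is to justify carefully that \emph{every} higher differential in the two spectral sequences (and not just $d_3$) really differs only by a unit power of $u^{-1}$, so that the comparison yields an isomorphism of the full cohomology rather than merely of the $E_2$- or $E_3$-page.
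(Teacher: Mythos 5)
Your final paragraph \emph{is} the paper's proof: the rescaling $\iota(\omega_k)=u^{-\lfloor k/2\rfloor}\omega_k$ is exactly the map the authors call $c_u$, and the identity you invoke is precisely their intertwining relation $c_u\circ(d+H)=u^{-k}(d+u^{-1}H)\circ c_u$ on forms of parity $k\in\{0,1\}$. Since $c_u$ is a bijection and the conjugating factor $u^{-k}$ is invertible, cocycles and coboundaries correspond, and the isomorphism follows in two lines. The paper stops there.

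The spectral-sequence preamble you put in front of this is both unnecessary and not actually complete, and you correctly flag the gap yourself: the assertion that every higher differential $d_r$ (for $r\geq 5$) of the two spectral sequences ``differs only by a unit power of $u^{-1}$'' is not something you can read off from the $E_2$ or $E_3$ pages. Those higher differentials are Massey-product-type operations defined by lifting cocycles through zig-zags of the filtration, and to compare them across the two complexes you need an actual morphism of filtered complexes inducing the comparison at every page. The only natural candidate for such a morphism is $\iota=c_u$ itself --- but once you have verified that $\iota$ intertwines the differentials (even up to the harmless unit $u^{-k}$), you already have the cohomology isomorphism outright and the spectral sequences add nothing. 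In short, the spectral-sequence comparison cannot be closed without first proving the cochain-level statement that it was supposed to help establish. One further caution: $\CC[[u,u^{-1}]]$ in this periodic-cohomology context denotes doubly infinite formal Laurent series, which is not a field (indeed not even a ring under the naive convolution product), so the ``no extension obstruction over a field'' step is not available; working directly with $c_u$ sidesteps this entirely. Lead with the cochain-level map and the verification of $c_u\circ(d+H)=u^{-k}(d+u^{-1}H)\circ c_u$, and drop the spectral-sequence detour.
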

\begin{proof}
Let $c_u : \Omega(Z)[[u, u^{-1}]] \to \Omega(Z)[[u, u^{-1}]]$ be defined by multiplication by $u^{-[\frac{i}{2}]}$
on $i$-forms. Then $c_u(H)= u^{-1} H$ and one checks that $c_u \circ (d+H) = u^{-k} (d + u^{-1}H) \circ c_u$ 
on $\Omega^{\bar k}(Z)[[u, u^{-1}]]$ for $k=0,1$, with $\bar k$ denoting the parity of $k$. Observing that 
$$H^\bu(\Omega(Z)[[u, u^{-1}]], d + H) = H^\bu(Z, H)[[u, u^{-1}]],$$ we see that
 $c_u$ gives the desired isomorphism.
\end{proof}
This isomorphism is  interpreted as {\em T-duality from a loop space perspective}, giving an equivalence (rationally) between 
D-brane charges in a background H-flux in type IIA and IIB string theories. As a consequence of the above, 
we propose that the configuration space 
of Ramond-Ramond fields to be the space of differential forms with coefficients in the holonomy line bundle 
 on loop space $LZ$ of the gerbe $\cG_B$ on spacetime $Z$, and that are closed under the equivariantly closed superconnection.

A refinement and extension of these results is the topic of the work in progress \cite{FHVM}.

\end{document}